\newcommand{\quiv}{\mathcal{Q}}
\newcommand{\Vmod}{\mathbb{V}}
\newcommand{\Wmod}{\mathbb{W}}
\newcommand{\Imod}{\mathbb{I}}
\newcommand{\im}{\mathrm{im\ }}
\newcommand{\Z}{\mathbb{Z}}					
\newcommand{\R}{\mathbb{R}}					
\newcommand{\Rips}{\mathcal{R}}
\newcommand{\PID}{\mathbf{D}}
\newcommand{\A}{\mathcal{A}}				
\newcommand{\Q}{\mathcal{Q}}				
\newcommand{\K}{\mathcal{K}}				
\newcommand{\Ah}{\hat{\mathcal{A}}}			
\newcommand{\Qh}{\hat{\mathcal{Q}}}			
\newcommand{\Kh}{\hat{\mathcal{K}}}			
\newcommand{\wh}{\hat{\w}}
\newcommand{\hasse}{\mathcal{H}}
\newcommand{\morsepart}{\A \sqcup \Q \sqcup \K}
\newcommand{\morse}[1]{
	(\A_{#1},\allowbreak \Q_{#1},\allowbreak \K_{#1}, \w_{#1})}	
\newcommand{\morseh}[1]{
	(\Ah_{#1},\allowbreak \Qh_{#1},\allowbreak \Kh_{#1},\allowbreak \wh_{#1})}	
\newcommand{\w}{\omega}
\newcommand{\clx}{X}				
\newcommand{\oclx}{\overline{\clx}}
\newcommand{\field}{\mathbb{F}}				
\newcommand{\F}{\mathcal{F}}				
\newcommand{\oF}{\overline{\F}}				
\newcommand{\M}{\mathcal{M}}				
\newcommand{\bM}{\overline{\M}}
\newcommand{\homfun}[1]{H(#1, \field)}		
\newcommand{\hf}[1]{H(#1)}					
\newcommand{\Hom}{H}
\newcommand{\homfund}[2]{H_{#2}(#1, \field)}	
\newcommand{\id}{\mathds{1}}				
\newcommand{\bo}{\partial}					
\newcommand{\mbo}[1]{\partial^{#1}}		
\newcommand{\isp}[2]{\left\langle #1, #2 \right\rangle}	
\newcommand{\inc}[2]{\left[ #1 : #2 \right]}
\newcommand{\mtx}{\mathcal{B}}				
\newcommand{\bmtx}{\overline{\mtx}}
\newcommand{\cycle}{Z}
\newcommand{\border}{B}
\newcommand{\chain}{C}
\newcommand{\orac}{\mathcal{C}}
\theoremstyle{plain}
\newtheorem{theorem}{Theorem}
\newtheorem{lemma}[theorem]{Lemma}
\newtheorem{prop}{Properties}
\newtheorem{notation}{Notations}
\newtheorem{definition}{Definition}
\theoremstyle{remark}
\newtheorem{remark}{Remark}
\setlist[enumerate,1]{label={\emph{(\arabic*)}}}
\setlist[enumerate]{wide, labelwidth=!, labelindent=0pt, itemsep=0px}
\newcommand\restr[2]{{
  \left.\kern-\nulldelimiterspace 
  #1 
  \vphantom{\big|} 
  \right|_{#2} 
  }}
\title{Discrete Morse Theory for Computing Zigzag Persistence}
\author{Cl\'ement Maria 
	\thanks{INRIA Sophia Antipolis-M\'editerran\'ee, France 
	-- \texttt{clement.maria@inria.fr}} 
	\and Hannah Schreiber 
	\thanks{Graz University of Technology, Austria 
	-- \texttt{hschreiber@tugraz.at}
	-- Supported by the Austrian Science Fund (FWF) grant number P 29984-N35.}
}
\date{\today}
\begin{document}

\maketitle


\begin{abstract}
	We introduce a theoretical and computational framework to use discrete 
	Morse theory as an efficient preprocessing in order to compute zigzag 
	persistent homology. 
	From a zigzag filtration of complexes $(\clx_i)$, we introduce a 
	\emph{zigzag Morse filtration} whose complexes $(\A_i)$ are Morse reductions of 
	the original complexes $(\clx_i)$, and we prove that they both have same 
	persistent homology. This zigzag Morse filtration generalizes the 
	\emph{filtered Morse complex} of Mischaikow and Nanda~\cite{MischaikowN13}, 
	defined for standard persistence.

	The maps in the zigzag Morse filtration are forward and backward 
	inclusions, as is standard in zigzag persistence, as well as a new type of 
	map inducing non trivial changes in the boundary operator of the Morse 
	complex. We study in details this last map, and design algorithms to 
	compute the update both at the complex level and at the homology matrix 
	level when computing zigzag persistence. The key point of our construction is 
	that it does not require any knowledge of past and future maps of the input 
	filtration. We deduce an algorithm to compute the zigzag persistence of a 
	filtration that depends mostly on the number of critical cells of the complexes, 
	and show experimentally that it performs better in practice.
\end{abstract}


\section{Introduction}
\label{sec:intro}
\emph{Persistent homology} is an algebraic method that permits to characterize 
the evolution of the topology of a growing sequences of spaces 
$\clx_1 \subseteq \ldots \subseteq \clx_n$, called a filtration. 
The theory has found many applications, especially in data analysis where it has 
been successfully applied to material science~\cite{Lee17nanoporous}, shape 
classification~\cite{czcg-pbs-05,ccgmo09}, or 
clustering~\cite{CBPC2013,cgos-pbc-13}.

Filtrations can be represented with help of diagrams as follows:
\begin{equation}\label{eq:ph_filtration}
	\xymatrix @C-5pt{
		\clx_1 \ar@{->}^-\subseteq[r] 
		& \clx_2 \ar@{->}^-\subseteq[r] 
		& \cdots \ar@{->}^-\subseteq[r] 
		& \clx_{n-1} \ar@{->}^-\subseteq[r] 
		& \clx_n
	}.
\end{equation}
Applying a homology functor, for a coefficient field $\field$, to a filtration 
leads to a sequence of vector spaces --- the \emph{homology groups} 
$\homfun{\clx_i}$ --- connected by maps induced by the inclusions, known as a 
\emph{persistence module}:
\begin{equation}\label{eq:persistence_module}
	\xymatrix @C-5pt{
		\homfun{\clx_1} \ar@{->}[r] 
		& \homfun{\clx_2} \ar@{->}[r] 
		& \cdots \ar@{->}[r] 
		& \homfun{\clx_{n-1}} \ar@{->}[r] 
		& \homfun{\clx_n}
	}.
\end{equation}
Computing the persistent homology of a filtration~(\ref{eq:ph_filtration}) 
consists of computing the isomorphism type, known as the 
\emph{interval decomposition}, of its corresponding persistence 
module~(\ref{eq:persistence_module}).

The success of persistent homology relies on sound theoretical 
foundations~\cite{0025666,EdelsbrunnerLZ02,ZomorodianC05}, favorable stability 
properties~\cite{BauerL15,0039900,Cohen-SteinerEH07}, and fast algorithms, 
both theoretically~\cite{ChenK13,Cohen-SteinerEM06,SilvaMV11,MilosavljevicMS11} and 
experimentally~\cite{Bauer:arXiv1303.0477,BauerKR14,BMDcamalgorithmica,Chen11PH}, 
to compute the interval decomposition of an input filtration. This last effort 
towards better implementations has led to dramatic improvements of running 
times in practice, and the emergence of efficient software libraries in the 
field, such as \texttt{Dionysus}~\cite{dionysus_morozov}, 
\texttt{DIPHA}~\cite{dipha_lib}, \texttt{GUDHI}~\cite{gudhi_ICMS14}, 
and \texttt{Ripser}~\cite{ripser_lib}.

Another approach to fast computation consists of preprocessing the input 
filtration~(\ref{eq:ph_filtration}) in order to drastically reduce the size of the 
domains $\clx_i$, while preserving the interval decomposition of the persistence 
module~(\ref{eq:persistence_module})~\cite{BoissonnatPP18,abs-1210-1429,%
MischaikowN13,5766002}. This approach has the double advantage of 
reducing both time and memory complexity. 
This goal has successfully been reached by the use of 
\emph{discrete Morse theory}~\cite{abs-1210-1429,Forman98DMT,MischaikowN13} (see 
also~\cite{CurryGN16,HarkerMMN14}), and led to the implementation of the efficient 
software, such as \texttt{Perseus}~\cite{perseus_cite} and 
\texttt{Diamorse}~\cite{diamorse_cite}. Additionally, noticeable successes, at the 
crossroad of persistence and discrete Morse theory, have been reached in the study of 
3D images~\cite{5766002}, allowing drastic improvements in memory and time 
performance, as well as the study of data ranging from medical imaging to material 
science~\cite{7025987,6873268,6134731}.

\emph{Zigzag persistent homology} is a generalization of persistent homology 
that allows the measurement and tracking of the topology of sequences of spaces 
that both grow and shrink, known as a \emph{zigzag filtrations}:
\begin{equation}\label{eq:zz_filtration}
	\xymatrix @C-5pt{
		\clx_1 \ar@{->}^-\subseteq[r] 
		& \clx_2 \ar@{<-}^-\supseteq[r] 
		& \cdots \ar@{->}^-\subseteq[r] 
		& \clx_{n-1} \ar@{<-}^-\supseteq[r] 
		& \clx_n
	},
\end{equation}
which gives a \emph{zigzag module}, also admitting an interval decomposition:
\begin{equation}\label{eq:zz_module}
	\xymatrix @C-5pt{
		\homfun{\clx_1} \ar@{->}[r] 
		& \homfun{\clx_2} \ar@{<-}[r] 
		& \cdots  \ar@{->}[r] 
		& \homfun{\clx_{n-1}} \ar@{<-}[r] 
		& \homfun{\clx_n}
	}.
\end{equation}
 
The theory of zigzag persistence was introduced in~\cite{CarlssonS10}, and 
theoretical~\cite{MilosavljevicMS11} and 
practical~\cite{CarlssonSM09,MariaO15} algorithms have been introduced to 
compute it. Zigzag persistence has great applicative potential, considering it 
provably produces better topological information in topology 
inference~\cite{os-zz-14}, while maintaining the homology of smaller spaces 
$\clx_i$ thanks to deletions of faces, and more generally allows a finer 
approach to data analysis, such as density estimation and topological 
bootstrapping~\cite{CarlssonS10}.

However, computing zigzag persistence is more intricate that computing 
persistent homology, essentially due to the fact that the full sequence of 
insertions and deletions of faces is unknown, which requires the 
maintenance and update of heavier data structures. As a consequence, none of 
the optimizations of persistence algorithms adapt to the 
zigzag case. The relatively poor performance of zigzag persistence 
implementations, compared with persistent homology ones, is a major hindrance 
to its practical use.

\paragraph{Motivation and applications for zigzag persistence.} 
We give two important applications of zigzag persistence on which we test the 
experimental performance of our method.

\begin{enumerate}[itemsep=\smallskipamount, topsep=\smallskipamount]
	\item \emph{Topology inference from data points $P$.} A standard 
	approach~\cite{0025666} consists of computing the persistent homology of the 
	Rips complex $\Rips^\rho(P)$ on the set of points $P$, for an increasing 
	threshold $\rho \geq 0$. We compute instead the zigzag persistence of 
	oscillating Rips zigzag filtrations~\cite{os-zz-14}. These filtrations add 
	data points progressively while reducing the scale of reconstruction in order 
	to adapt to a more and more dense set of points. Specifically, 
	\begin{equation}\label{eq:zz_filtration_oRzz}
		\xymatrix @C-5pt{
		& \ar@{->}[l]|-{\cdots} \Rips^{\mu \varepsilon_i} (P_i) 
			\ar@{->}^-\subseteq[r] 
		& \Rips^{\nu \varepsilon_i} (P_i \cup \{p_{i+1}\}) \ar@{<-}^-\supseteq[r]	
		& \Rips^{\mu \varepsilon_i} (P_i \cup \{p_{i+1}\}) \ar@{->}[r]|-{\cdots} 
		& \\
		},
	\end{equation}
	where $\Rips^\alpha(P)$ is the Rips complex of threshold $\alpha$ on points 
	$P$, and $\varepsilon_i$ a measure of the ``sparsity'' of the set of points 
	$P_i := \{p_1, \ldots , p_{i}\}$ that decreases when points are added. 
	Finally, $0 < \mu \leq \nu$ are parameters. This filtration is known to 
	furnish provably correct persistence diagrams, with much less noise than 
	standard persistence~\cite{os-zz-14}, while naturally maintaining much 
	smaller complexes during computation. This application is of importance in 
	data analysis~\cite{CBPC2013,cgos-pbc-13}.
	
	\item \emph{Levelset persistence of images.} Given a function 
	$f \colon \clx \to \R$ on a domain $\clx$, classical persistence studies the 
	persistent homology of sublevel sets $f^{-1}(-\infty , \rho]$ for an 
	increasing $\rho$. Levelset persistence~\cite{CarlssonSM09} studies instead 
	the zigzag persistence of of the pre-images of intervals, for appropriate 
	$s_1 \leq s_2 \leq \ldots$,
	\begin{equation}\label{eq:zz_filtration_levelsetzigzag}
		\xymatrix @C-5pt{
		& f^{-1}[s_{i-1} , s_{i}] \ar@{->}[l]|-{\cdots} \ar@{->}^-\subseteq[r] 
		& f^{-1}[s_{i-1} , s_{i+1}] \ar@{<-}^-\supseteq[r] 
		& f^{-1}[s_{i} , s_{i+1}] \ar@{->}[r]|-{\cdots} 
		& \\
		}.
	\end{equation}
	From the levelset persistence, one can recover the sublevel set 
	persistence~\cite{CarlssonSM09}, while maintaining again much smaller 
	structures. This application is of particular importance for medical imaging 
	and material science~\cite{7025987,6873268,6134731}.
\end{enumerate}

\paragraph{Streaming model and memory efficiency.} 
A main advantage of zigzag persistence is to consequently maintain much smaller 
complexes over the computation. To formalize this notion, we adopt a streaming 
model for the computation of zigzag persistence. The input is given by a stream 
of insertions and deletions of faces, with no knowledge of the entire zigzag 
filtration, and zigzag persistence is computed ``on the fly''. In particular, 
the memory complexity of our algorithms, depends solely on the maximal size of 
any complex in the filtration, $\max_i |\clx_i|$, as opposed to the entire 
number of insertions and deletions of faces, which is generally much larger.

\paragraph{Contributions and existing results.} 
In the spirit of~\cite{MischaikowN13}, we introduce a preprocessing reduction of 
a zigzag filtration based on discrete Morse theory~\cite{Forman98DMT}. After 
introducing some background in Section~\ref{sec:background}, we introduce in 
Section~\ref{sec:hl_algo} a {\em zigzag Morse filtration} that generalizes the 
filtered Morse complex~\cite{MischaikowN13} of standard persistence, and we 
prove that it has same persistent homology as the input zigzag filtration. 
Because of removal of cells not agreeing with the Morse decomposition, the 
zigzag Morse filtration contains chain maps that are not inclusions. We study 
the effect of those maps on the boundary operator of the Morse complex in 
Section~\ref{sec:boundary}, and design a persistence algorithm for zigzag Morse 
complexes in Section~\ref{sec:ll_algo}. 
Finally, we report on the experimental performance of the zigzag persistence 
algorithm for Morse complexes in Section~\ref{sec:expe}.

Note that a similar approach to adapt discrete Morse theory to zigzag 
persistence was followed by Escolar and Hiraoka~\cite{EmersonDMTZZ2014}. 
Adapting~\cite{MischaikowN13}, they define a \emph{global} zigzag filtered Morse 
complex for a zigzag filtration, and study its interval decomposition. The main 
limitation of their approach is that the user must know the entirety of the 
input zigzag filtration to compute the Morse pairing, canceling the benefit of 
using ``small complexes'' in zigzag persistence. On the contrary, our approach 
requires no other than local knowledge of the input zigzag filtration, and all 
computation are done ``on the fly'' in the streaming model.

\section{Background}
\label{sec:background}
\paragraph{Quiver theory.}
Throughout this article, we fix a field $(\field,+,\cdot)$.
An \emph{$A_n$-type quiver} $\quiv$ is a directed graph:
\[
	\xymatrix @C-5pt{
		\bullet_1 \ar@{<->}[r] 
		& \bullet_2 \ar@{<->}[r] 
		& \cdots \ar@{<->}[r] 
		& \bullet_{n-1} \ar@{<->}[r] 
		& \bullet_n
	},
\]
where, by convention in this article, bidirectional arrows are either forward 
or backward.

An \emph{$\field$-representation} of $\quiv$ is an assignment of a finite
dimensional $\field$-vector space $V_i$ for every node $\bullet_i$ and
an assignment of a linear map $f_i \,\colon\, V_i \leftrightarrow V_{i+1}$ for 
every arrow $\bullet_i \leftrightarrow \bullet_{i+1}$, the orientation of the 
map being the same as that of the arrow. We denote such a representation by 
$\Vmod = (V_i,f_i)$. In computational topology, an $\field$-representation of
an $A_n$-type quiver is called a \emph{zigzag module}.

\begin{wrapfigure}[5]{r}{90pt}
	\vspace{-37pt}
	\[
		\xymatrix{
			V_i \ar@{<->}[r]^{f_i} \ar[d]_{\phi_i} 
			& V_{i+1} \ar[d]^{\phi_{i+1}} \\
			W_{i}\ar@{<->}[r]^{g_i}
			& W_{i+1}
		}
	\]
\end{wrapfigure}

Let $\Vmod = (V_i,f_i)$ and $\Wmod = (W_i, g_i)$ be two 
$\field$-representations of a same quiver $\quiv$. 
A \emph{morphism of representations} $\phi \,\colon\, \Vmod \to \Wmod$ is a 
set of linear maps $\{\phi_i\,:\, V_i \to W_i\}_{i = 1 \ldots n}$ such that 
the diagram on the right commutes for every arrow of~$\quiv$. The morphism is 
called an \emph{isomorphism} (denoted by $\cong$) if every $\phi_i$ is 
bijective.

The \emph{direct sum} of two $\field$-representations $\Vmod = (V_i, f_i)$, 
$\Wmod = (W_i, g_i)$, denoted by $\Vmod \oplus \Wmod$, is the representation 
of $\quiv$ with space $V_i \oplus W_i$ for every node $\bullet_i$, and with 
map $f_i \oplus g_i = \left(
	\begin{smallmatrix}
		f_i	& 0\\
		0	& g_i
	\end{smallmatrix}
\right)$
for every arrow $\bullet_i \leftrightarrow \bullet_{i+1}$. An
$\field$-representation $\Vmod$ is \emph{decomposable} if it can be
written as the direct sum of two non-trivial representations. It is
otherwise said to be \emph{indecomposable}.

Finally, for any $1 \leq b \leq d \leq n$, define the 
\emph{interval representation} $\Imod[b;d]$ as follows:
\[
	\xy
	\xymatrix @C-5pt{
		0 \ar@{<->}[r]^-0 & 
		\cdots \ar@{<->}[r]^-0 & 
		0 \ar@{<->}[r]^-0 & 
		\field \ar@{<->}[r]^-{\id} & 
		\cdots \ar@{<->}[r]^-{\id} & 
		\field \ar@{<->}[r]^-0 & 
		0 \ar@{<->}[r]^-0 & 
		\cdots \ar@{<->}[r]^-0 & 
		0 
	}
	\POS"1,1"."1,3"!C*\frm{_\}},+D*++!U\txt{$\scriptstyle{[1;b-1]}$}
	\POS"1,4"."1,6"!C*\frm{_\}},+D*++!U\txt{$\scriptstyle{[b;d]}$}
	\POS"1,7"."1,9"!C*\frm{_\}},+D*++!U\txt{$\scriptstyle{[d+1;n]}$}
	\endxy
	,
\]
where the maps $0$ and $\id$ stand respectively for the null map and the 
identity map.

Theorem~\ref{thm:zz_gabriel} states that every representation of an 
$A_n$-type quiver can be decomposed into interval representations, which are 
the indecomposables for that quiver:

\begin{theorem}[Krull-Remak-Schmidt, Gabriel]\label{thm:zz_gabriel}
	Every $\field$-representation $\Vmod$ of an $A_n$-type quiver can be 
	decomposed as a direct sum of indecomposables: 
	$\,\Vmod \cong \Vmod^1 \oplus \Vmod^2 \oplus \cdots \oplus \Vmod^N$, where 
	each indecomposable $\Vmod^j$ is isomorphic to some interval 
	representation $\Imod[b_j;d_j]$. This decomposition is unique up to 
	permutation of the indecomposables.
\end{theorem}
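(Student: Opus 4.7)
The plan is to prove the theorem in two stages: first establish that every representation decomposes as a direct sum of indecomposables, then identify the indecomposables as interval representations and deduce uniqueness from a Krull--Remak--Schmidt argument. Existence of \emph{some} decomposition into indecomposables is straightforward by induction on the total dimension $\dim \Vmod := \sum_{i=1}^n \dim V_i$: if $\Vmod$ is indecomposable we are done, and otherwise $\Vmod \cong \Vmod' \oplus \Vmod''$ with both summands nontrivial and of strictly smaller total dimension, to which we apply the induction hypothesis and concatenate.

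The heart of the argument is to show that every indecomposable $\Vmod$ is isomorphic to some $\Imod[b;d]$. First I would argue that the support $S := \{i : V_i \neq 0\}$ is an interval: if $S$ were disconnected, the sub-representations carried by the two components would yield a nontrivial direct sum decomposition, contradicting indecomposability. Next, by performing a simultaneous change of basis in each $V_i$ — equivalently, by iteratively reducing each matrix $f_i$ to a Smith-like normal form while carefully bookkeeping how the basis change in $V_i$ affects both adjacent matrices $f_{i-1}$ and $f_i$ — I would bring the chain of maps on $S$ into a block form whose only blocks are identity and zero. In this normal form $\Vmod$ splits visibly as a direct sum of interval representations, one per ``identity-thread'' running through consecutive indices; indecomposability forces exactly one thread, so $\dim V_i = 1$ for all $i \in S$, every restricted $f_i$ is the identity, and $\Vmod \cong \Imod[\min S ; \max S]$.

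For uniqueness up to permutation, I would invoke the Krull--Remak--Schmidt theorem. The required input is that the endomorphism ring $\mathrm{End}(\Vmod^j)$ of any indecomposable finite-dimensional summand is local, which follows from Fitting's lemma applied componentwise: any $\phi \in \mathrm{End}(\Vmod^j)$ yields, for sufficiently large $N$, a decomposition $\Vmod^j \cong \ker \phi^N \oplus \operatorname{im} \phi^N$, and indecomposability then forces $\phi$ to be either nilpotent or an isomorphism. The standard cancellation argument for modules with local endomorphism rings then delivers the desired uniqueness.

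The main obstacle is the normal-form step in the classification. Unlike a single linear map, a chain of alternating forward and backward arrows does not admit an obvious joint canonical form, because a change of basis in $V_i$ simultaneously modifies the matrices of $f_{i-1}$ and $f_i$, which may point in opposite directions. Making the inductive reduction rigorous across every forward/backward direction change, and verifying that no nontrivial summand can be peeled off a given thread, is the crux of Gabriel's theorem for $A_n$ and the step I would spend most of the work on.
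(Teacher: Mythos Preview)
The paper does not actually prove this theorem: it is stated in the Background section as a classical result (attributed to Krull--Remak--Schmidt and Gabriel) and used without proof. There is therefore no ``paper's own proof'' to compare against.

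As a standalone sketch your outline is sound and follows the standard route. The reduction to indecomposables by induction on total dimension, the observation that the support of an indecomposable must be an interval, and the Krull--Remak--Schmidt argument via Fitting's lemma and local endomorphism rings are all correct and cleanly stated. You are also right that the delicate step is the simultaneous normal-form reduction of the chain of maps; the usual way to make this rigorous for arbitrary orientations of $A_n$ is either a careful induction on $n$ (peeling off one vertex at a time and tracking how the kernel/image decomposition at the end vertex propagates) or the use of reflection functors, which give a more conceptual inductive machine. Either would complete the argument you have in mind; the ad hoc ``Smith-like'' reduction you describe can be made to work but, as you note, requires care at each change of arrow direction.
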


In computational topology, such algebraic decomposition of a zigzag module is 
called an \emph{interval decomposition}.

\paragraph{Complexes and homology.}
We refer the reader to~\cite{lefschetz1942algebraic} for an introduction to 
general abstract complexes and their homology, and to~\cite{0025666} for an 
introduction to persistent homology.

Note that, in practice, it is common to work with specific complexes, such as 
simplicial or cubical complexes (as in Section~\ref{sec:expe}). However, Morse 
reductions (introduced below) produce general complexes, which forces us to work
in this general setting.

An \emph{abstract complex} over a principal ideal domain $\PID$ (such as the 
ring of integers $\Z$ or a field $\Z / p\Z$ for $p$ prime) is a graded 
finite collection $\clx = \bigsqcup_{d \in \Z} \clx_d$ of elements, called 
\emph{cells} or \emph{faces}, together with an \emph{incidence function} 
$\inc{\cdot}{\cdot}^{\clx} \colon \clx \times \clx \to \PID$. The 
\emph{dimension} of a cell $\sigma \in \clx_d$ is $\dim \sigma = d$. The 
incidence function satisfies, for any cells $\sigma$, $\tau$, and $\mu$:
\[ 
	\inc{\sigma}{\tau}^{\clx} \neq 0 \Rightarrow \dim \sigma = \dim \tau + 1
	\quad \text{ and } \quad
	\sum_{\tau \in \clx} \inc{\sigma}{\tau}^{\clx} \cdot \inc{\tau}{\mu}^{\clx}
	= 0.
\]
If $\inc{\sigma}{\tau}^{\clx} \neq 0$, we call $\tau$ a \emph{facet} of 
$\sigma$, and $\sigma$ a \emph{cofacet} of $\tau$. If a cell has no cofacet, it
is called \emph{maximal}.

Standard examples of complexes are \emph{simplicial complexes} and 
\emph{cubical complexes}, with an orientation fixed on their cells. In this 
case, the principal ideal domain $\PID$ is the ring of integers $\Z$, and 
incidence function takes values in $\{-1, 0, 1\} \subset \Z$. In this work, we
consider general complexes because they appear under the form of 
\emph{Morse complexes}, defined later.

For a field of coefficients $\field$, we associate to a complex 
$(\clx, \inc{\cdot}{\cdot}^{\clx})$ a \emph{chain complex} 
$\chain(\clx, \field) = \bigoplus_d \chain_d(\clx, \field)$, where 
$\chain_d(\clx, \field)$ is the $\field$-vector space freely generated by the
$d$-dimensional cells $\clx_d$ of $\clx$. For every dimension $d$, the 
\emph{boundary operator} 
$\mbo{\clx}_d \colon \chain_d(\clx) \to \chain_{d-1}(\clx)$ is generated by:
\[
	\mbo{\clx}_d \sigma = 
	\sum_{\tau \in \clx_{d-1}} \inc{\sigma}{\tau}^{\clx} \cdot \tau.
\]
The \emph{$d$-cycles} and \emph{$d$-boundaries} are 
$\cycle_d(\clx,\field) = \ker \mbo{\clx}_d$ and 
$\border_d(\clx,\field) = \im \partial_{d+1}$ respectively, and the 
$d^{th}$ homology group is the quotient
\[
	\homfund{\clx}{d} = \faktor{\cycle_d(\clx,\field)}{\border_d(\clx,\field)}.
\]

In order to simplify notations, we fix the field $\field$ for the rest of the 
article, and remove it from notations. 
To put emphasis on the boundary operator, we denote a complex by $(\clx, \bo)$,
where $\mbo{} \colon C(\clx) \to C(\clx)$ is 
$\bo = \bigoplus_d \, \mbo{\clx}_d$. 
We avoid the superscript $\mbo{\clx}$ when possible.

We denote by 
$\isp{\cdot}{\cdot} \colon \chain(\clx) \times \chain(\clx) \to \Z$ the 
inner product on $\chain(\clx)$ making the canonical basis of cells 
$\{\sigma\}_{\sigma \in \clx}$ orthonormal. In particular, if $\tau$ is in the 
boundary of $\sigma$, $\isp{\partial \sigma}{\tau} = \inc{\sigma}{\tau}^{\clx}$ in 
$(\clx,\mbo{})$. 
For a chain $c \in C(\clx)$, we say that $c$ \emph{contains} a cell $\sigma$, 
and write $\sigma \in c$, if the coefficient of $\sigma$ is non-zero in $c$.

\begin{definition}\label{def:fil}
	Let $\clx$ and $\clx'$ be two complexes; $\clx$ is included in $\clx'$ if 
	$\clx \subseteq \clx'$ as sets of cells, and 
	$\restr{\inc{\cdot}{\cdot}^{\clx'}}{\clx} = \inc{\cdot}{\cdot}^{\clx}$. 
	We also denote the inclusion of complexes by $\clx \subseteq \clx'$.

	A \emph{standard filtration} is a finite collection of complexes with 
	inclusion relations going one way 
	$\clx_1 \subseteq \clx_2 \subseteq \clx_3 \subseteq \cdots$. 
	A \emph{zigzag filtration} is a collection of complexes with inclusion 
	relations going both ways 
	$\clx_1 \subseteq \clx_2 \supseteq \clx_3 \subseteq \cdots$. 
\end{definition}

Finally, a \emph{chain map} $\psi \colon \chain(\clx) \to \chain(\clx')$ is a 
map that commutes with the boundary operators of $\clx$ and $\clx'$. It induces 
a morphism $\psi_* \colon \Hom(\clx) \to \Hom(\clx')$ of homology groups.

\noindent
\begin{minipage}{0.73\textwidth}
	\medskip
	\begin{notation}
		Let $\clx, \clx', Y, Y'$ be complexes, such that $\clx \subseteq \clx'$ 
		and $Y \subseteq Y'$, and let $\phi \colon \chain(\clx) \to \chain(Y)$ 
		and $\phi' \colon \chain(\clx') \to \chain(Y')$ be chain maps. If the 
		square on the right commutes, we allow ourselves to use the same 
		notation $\phi$ for both $\phi$ and $\phi'$, when there is no ambiguity 
		on their domain and codomain.
	\end{notation}
\end{minipage}
\hspace{3px}
\begin{minipage}{0.23\textwidth}
	\flushright
	$	
		\xymatrix@C-10pt @R-10pt{
			C(\clx) \ar@{->}[r]^{\subseteq} \ar[d]_{\phi} 
			& C(\clx') \ar[d]^{\phi'} \\
			C(Y)\ar@{->}[r]^{\subseteq}
			& C(Y')
		}
	$
\end{minipage}

\begin{notation}
	By a small abuse of notations, when two complexes $\clx$ and 
	$\clx \cup \{\sigma\}$ differ by a single cell $\sigma$, we use the notation
	$\xymatrix{\clx \, \ar@{^{(}->}[r]^-{\sigma} & \clx \cup \{ \sigma \}}$ to 
	name the chain map induced by the inclusion. When they differ by a set of 
	cells $\Sigma$, we use the notation 
	$\xymatrix{\clx \, \ar@{^{(}->}[r]^-{\Sigma} & \clx \cup \Sigma}$. 
\end{notation}

\paragraph{Discrete Morse theory.} 
We refer the reader to~\cite{Forman98DMT} for an introduction to discrete Morse
theory, and to~\cite{MischaikowN13} for its application in persistent homology. 
We follow the general presentation of~\cite{MischaikowN13}.

The incidence function of a complex induces a \emph{face partial ordering} $<$ 
on $\clx$ by taking the transitive closure of the relation $\prec$ defined by
\[
	\tau \prec \sigma \quad\text{ iff }\quad \inc{\sigma}{\tau}^{\clx} \neq 0.
\]
A \emph{partial matching} of $\clx$ is a partition 
$\clx = \morsepart$ of the cells of the complex, together with a 
bijective pairing $\Q \leftrightarrow \K$, such that if 
$(\tau, \sigma) \in \Q \times \K$ are paired, then 
$\dim \sigma = \dim \tau + 1$, and $\inc{\sigma}{\tau}^{\clx} \neq 0$ is a unit 
in $\PID$ (e.g., $1$ or $-1$ if $\PID=\Z$). We call such pair of cells a 
\emph{Morse pair}. 
We denote the bijection $\w \colon \Q \to \K$, such that Morse pairs are of the 
form $(\tau, \w(\tau))$.

Call $\hasse$ the \emph{oriented Hasse diagram} of $(\clx, <)$ 
where arrows are oriented downwards (i.e., from higher to lower dimensions), 
except for the arrows between cells of Morse pairs 
$(\tau, \sigma) \in \Q \times \K$, oriented upwards. 

A \emph{Morse matching} of a complex $\clx$ is a partial matching 
that induces an \emph{acyclic} oriented Hasse diagram $\hasse$ for $\clx$.
We denote a Morse matching with a partition $\morsepart$ and pairing 
$\w \colon \Q \to \K$ by $\morse{}$. Note that a Morse matching can also be 
defined on a subset $\Sigma$ of cells of a complex $\clx$. By convention, we 
denote $\morse{}$ Morse matchings for a \emph{complex}, and $\morseh{}$ Morse 
matchings for a \emph{set of faces} not forming a complex.

In a complex with a Morse matching, a \emph{gradient path} between a 
$d+1$-dimensional cell $\nu$ and a $d$-dimensional cell $\mu$ is a simple 
directed path in $\hasse$ from $\nu$ to $\mu$ alternating between $d$ and 
$d+1$-dimensional cells\footnote{
	Note that our definition differs from the original 
	reference~\cite{Forman98DMT}, where gradient paths connect cells of same 
	dimension.
}. 
Every gradient path $\gamma$ is consequently simple and of the form:
\begin{equation}\label{eq:path}
	\gamma = \xymatrix @C-12pt @R-25pt{
		\nu
		&
		& \w(\tau_1)
		&
		& \w(\tau_2) \ \ \ldots \ \ 
		&
		& \w(\tau_{r}) 
		& 
		& \dim d+1 \\
		& \tau_1 \ar@{<-}[lu] \ar@{->}[ru]
		& 
		& \tau_2 \ar@{<-}[lu] \ar@{->}[ru]
		& 
		& \tau_{r} \ar@{<-}[lu] \ar@{->}[ru]
		& 
		& \mu \ar@{<-}[lu]
		& \dim d.
	}
\end{equation}

We denote by $\Gamma(\nu,\mu)$ the set of all distinct gradient paths from $\nu$
to $\mu$, and we define for every path $\gamma$ (with the notations of 
Diagram~(\ref{eq:path})) its \emph{multiplicity} $m(\gamma)$:
\[
	m(\gamma) := \inc{\nu}{\tau_1}^{\clx} \cdot (-1)^r \cdot 
	\prod_{i=1}^{r} \left(\inc{\w(\tau_i)}{\tau_i}^{\clx}\right)^{-1} \ \cdot \ 
	\prod_{i=1}^{r-1} \inc{\w(\tau_i)}{\tau_{i+1}}^{\clx}  \ \cdot \ 
	\inc{\w(\tau_r)}{\mu}^{\clx}
\]
and $m(\gamma) = \inc{\nu}{\mu}^{\clx}$ for the one-edge path 
$\gamma = (\nu,\mu)$, if it exists. In other words, the multiplicity is the 
product of incidences for downward arrows, times the product of minus the 
inverse of incidences for upward arrows in the path.

Given a complex $\clx$ and a Morse matching $\morse{}$, the \emph{Morse complex} 
$(\A, \mbo{\A})$ associated to the matching is the complex based on the cells of 
$\A$, called the \emph{critical cells}, with incidence function 
$\inc{\cdot}{\cdot}^{\A} \colon \A \times \A \to \PID$ defined, for two critical 
cells $\nu, \mu \in \A$, by
\[
	\inc{\nu}{\mu}^{\A} := \sum_{\gamma \in \Gamma(\nu,\mu)} m(\gamma).
\]
The dimension of a critical cell $\sigma$ in $\A$ is the same as the dimension 
of $\sigma$ in the original complex $\clx$. We denote the set of $d$-dimensional 
cells of $\A$ by $\A_d$. As a complex, the boundary operator of $\A$ is defined, 
for $\sigma \in \A_d$ a critical cell of dimension $d$, by
\[
	\mbo{\A}_d : \A_d \to \A_{d-1} ,
	\quad\text{ such that }\quad
	\mbo{\A}_d \tau = \sum_{\mu \in \A_{d-1}} \inc{\nu}{\mu}^{\A} \cdot \mu.
\]

By a small abuse of notation, we refer to $\clx$ and $\A$ as chain complexes 
and write $H(\clx)$ and $H(\A)$ for their homology, provided there is no 
ambiguity in the definition of their incidence function and boundary maps. 

We finally have the fundamental theorem of discrete Morse theory,

\begin{theorem}[Forman~\cite{Forman98DMT}] \label{thm:forman}
	A complex $(\clx,\bo^{\clx})$ and a Morse complex $(\A, \mbo{\A})$, for a 
	Morse matching $\morse{}$ of $\clx$, have isomorphic homology 
	groups\footnote{In fact, the complexes are \emph{homotopy equivalent}.}.
\end{theorem}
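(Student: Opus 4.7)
The plan is to prove Forman's theorem by induction on the number $|\Q|$ of Morse pairs, using acyclicity of the Hasse diagram $\hasse$ to peel off one pair at a time. The base case $|\Q|=0$ is immediate: $\A = \clx$, every gradient path is a single edge, and the multiplicity formula reduces to $\inc{\nu}{\mu}^{\A} = \inc{\nu}{\mu}^{\clx}$, so the two complexes coincide.

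For the inductive step, acyclicity of $\hasse$ lets me extract an ``initial'' Morse pair $(\tau, \ww{\tau})$, i.e. a pair such that no gradient path from $\ww{\tau}$ returns to $\tau$ after traversing other Morse pairs. Removing it yields a complex on $\clx \setminus \{\tau, \ww{\tau}\}$ with incidences updated by the standard elementary Morse collapse formula
\[
	\inc{\nu}{\mu}^{\mathrm{new}} \;=\; \inc{\nu}{\mu}^{\clx} \;-\; \frac{\inc{\nu}{\tau}^{\clx} \cdot \inc{\ww{\tau}}{\mu}^{\clx}}{\inc{\ww{\tau}}{\tau}^{\clx}},
\]
which precisely re-routes any one-step flow passing through the edge $(\tau, \ww{\tau})$. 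I would check that this new incidence function is well-defined, that the residual matching $(\A, \Q \setminus \{\tau\}, \K \setminus \{\ww{\tau}\}, \w|_{\Q \setminus \{\tau\}})$ is still a Morse matching with an acyclic Hasse diagram, and that its Morse complex (in the sense of the statement) is still $(\A, \mbo{\A})$ — so that the inductive hypothesis can be applied to the reduced complex.

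The heart of the argument is to show that this single elementary collapse preserves homology. For that I would exhibit an explicit strong deformation retraction $(\iota, \pi, \Phi)$: the inclusion $\iota$ from the reduced complex, a projection $\pi$ that sends $\ww{\tau} \mapsto 0$ and $\tau \mapsto -\sum_{\mu \neq \tau} (\inc{\ww{\tau}}{\mu}^{\clx}/\inc{\ww{\tau}}{\tau}^{\clx}) \, \mu$ (all other cells fixed), and a chain homotopy $\Phi$ sending $\ww{\tau} \mapsto (1/\inc{\ww{\tau}}{\tau}^{\clx}) \, \tau$ and vanishing on every other basis cell. A direct computation then yields $\pi \circ \iota = \id$ and $\id - \iota \circ \pi = \bo \Phi + \Phi \bo$, giving the isomorphism on homology.

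\textbf{Main obstacle.} The delicate point is the combinatorial bookkeeping when the collapse is iterated $|\Q|$ times. Each single-step update only sees ``length one through the currently collapsed pair,'' so I must check that telescoping these updates along the topological order of $\hasse$ yields exactly $\sum_{\gamma \in \Gamma(\nu,\mu)} m(\gamma)$ between critical cells $\nu,\mu \in \A$. I would address this by an inner induction on the length of a gradient path: a path of length $r$ through pairs $(\tau_1, \ww{\tau_1}), \ldots, (\tau_r, \ww{\tau_r})$ is born from the collapse of its first visited pair — contributing the factor $-\inc{\ww{\tau_1}}{\tau_2}^{\clx}/\inc{\ww{\tau_1}}{\tau_1}^{\clx}$ — and becomes a path of length $r-1$ in the reduced complex handled by the hypothesis. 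Matching the signs, the $(-1)^r$ prefactor, and the product of inverse incidences in $m(\gamma)$ to the iterated collapse formula is where the care is concentrated; once this is verified the statement drops out.
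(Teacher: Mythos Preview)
The paper does not give its own proof of this theorem: it is stated as background, attributed to Forman, and used as a black box. The only place the paper touches the mechanism behind it is Properties~\ref{prop:phi}, where (again citing~\cite{MischaikowN13}) the chain equivalence $\psi$ between $\chain(\clx)$ and $\chain(\A)$ is factored as a product $\prod \psi_{\tau,\sigma}$ of elementary collapses, one per Morse pair, in arbitrary order. Your inductive scheme---peel off one Morse pair, exhibit an SDR for that single collapse, and check that iterating recovers the gradient-path formula for $\inc{\cdot}{\cdot}^{\A}$---is precisely the content underlying that factorisation, so your outline is the standard route and is compatible with what the paper assumes.

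Two technical slips in your SDR, however, would make the ``direct computation'' fail as written. First, a chain homotopy raises degree by one, so it must send $\tau \mapsto \bigl(\inc{\ww{\tau}}{\tau}^{\clx}\bigr)^{-1}\ww{\tau}$, not $\ww{\tau} \mapsto \bigl(\inc{\ww{\tau}}{\tau}^{\clx}\bigr)^{-1}\tau$; your $\Phi$ goes the wrong way. Second, with $\iota$ equal to the naive inclusion and the corrected $\Phi$, the identity $\id - \iota\pi = \bo\Phi + \Phi\bo$ fails on any $(d{+}1)$-cell $\nu \neq \ww{\tau}$ with $\inc{\nu}{\tau}^{\clx}\neq 0$: the left side is $0$ while the right side is $\bigl(\inc{\nu}{\tau}^{\clx}/\inc{\ww{\tau}}{\tau}^{\clx}\bigr)\,\ww{\tau}$. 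The fix is to let $\iota$ act on such $\nu$ by $\nu \mapsto \nu - \bigl(\inc{\nu}{\tau}^{\clx}/\inc{\ww{\tau}}{\tau}^{\clx}\bigr)\,\ww{\tau}$ (equivalently, this is the map $\varphi_{\tau,\sigma}$ of Properties~\ref{prop:phi}); with that adjustment $\iota$ is a genuine chain map and the homotopy identity holds. Once these are corrected, your induction and the path-length bookkeeping go through, and the ``main obstacle'' you identify is handled exactly as you describe.
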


\paragraph{Persistent homology and discrete Morse theory.} 
We refer the reader to~\cite{MischaikowN13} for the study of the (standard) 
persistent homology of discrete Morse complexes.

Persistent homology is the study of persistent modules induced by filtrations. 
Let $\clx_1 \subseteq \ldots \subseteq \clx_n$ be a filtration of complexes. A 
\emph{standard Morse filtration} (called \emph{filtered Morse complex} 
in~\cite{MischaikowN13}) for this filtration is a collection of Morse matchings 
$\morse{i}_{i = 1 \ldots n}$ for each $\clx_i$, with Morse complex 
$(\A_i, \mbo{\A_i})$ on the critical cells, and Morse pairs 
$\xymatrix{\omega_i \colon \K_i \ar@{->}[r]|-{\text{bij.}} & \Q_i\\}$, 
satisfying:
\begin{equation}\label{eq:stdmorse}
	\A_i \subseteq \A_{i+1}, \quad \Q_i \subseteq \Q_{i+1},
	\quad \K_i \subseteq \K_{i+1}, \quad \restr{\w_{i+1}}{\Q_i} = \w_i, 
	\quad \restr{\mbo{\A_{i+1}}}{\A_i} = \mbo{\A_{i}}.
\end{equation}
A filtered Morse complex consequently forms a filtration 
$\A_1 \subseteq \ldots \subseteq \A_n$ of Morse complexes connected by 
inclusions. It induces naturally a persistence module:
\[
	\xymatrix @C-5pt{
		\homfun{\A_1} \ar@{->}[r] 
		& \homfun{\A_2} \ar@{->}[r] 
		& \cdots \ar@{->}[r] 
		& \homfun{\A_{n-1}} \ar@{->}[r] 
		& \homfun{\A_n}
	}.
\]

Forman's isomorphism between homology groups of complexes and Morse complexes 
extends to persistent homology groups within this framework. Specifically,

\begin{theorem}[Forman~\cite{Forman98DMT}, 
Mischaikow and Nanda~\cite{MischaikowN13}]
\label{thm:filteredmorse}
	Let $\morse{i}_{i = 1 \ldots n}$ be a standard Morse filtration for a 
	filtration $\clx_1 \subseteq \ldots \subseteq \clx_n$. There exist 
	collections of chain maps 
	$(\psi_i : \chain(\clx_i) \to \chain(\A_i))_{i=1 \ldots n}$ 
	and 
	$(\varphi_i : \chain(\A_i) \to \chain(\clx_i))_{i=1 \ldots n}$ 
	for which the following diagrams commute for every $i$:
	\[
		\xymatrix@R-10pt{
			\chain(\clx_i) \ar@{->}[r]^{\subseteq} \ar[d]_{\psi_i} 
			& \chain(\clx_{i+1}) \ar[d]^{\psi_{i+1}} \\
			\chain(\A_i) \ar@{->}[r]^{\subseteq}
			& \chain(\A_{i+1})
		}
		\hspace{80pt}
		\xymatrix@R-10pt{
			\chain(\clx_i) \ar@{->}[r]^{\subseteq} \ar@{<-}[d]_{\varphi_i} 
			& \chain(\clx_{i+1}) \ar@{<-}[d]^{\varphi_{i+1}} \\
			\chain(\A_i) \ar@{->}[r]^{\subseteq}
			& \chain(\A_{i+1})
		}
	\]
	and $\varphi_i$ and $\psi_i$ induce isomorphisms at the homology level, that 
	are inverses of each other. Consequently, these maps induce 
	isomorphisms between the persistent modules of the filtration and the 
	Morse filtration.
\end{theorem}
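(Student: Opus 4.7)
My plan is to construct the chain maps $\psi_i$ and $\varphi_i$ from the standard discrete flow associated to each Morse matching, and then verify that they are natural with respect to the inclusions. The fact that they induce mutually inverse isomorphisms at the homology level is Theorem~\ref{thm:forman} applied fibrewise, so the new content lies in the commutativity of the two squares.

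For each index $i$, define $V_i \colon \chain(\clx_i) \to \chain(\clx_i)$ by $V_i(\tau) = -\inc{\w_i(\tau)}{\tau}^{-1}\, \w_i(\tau)$ for $\tau \in \Q_i$ and $V_i(\sigma) = 0$ for $\sigma \in \A_i \cup \K_i$, and set $\Phi_i = \id + \bo V_i + V_i \bo$. Acyclicity of the matching ensures that $\Phi_i^N$ stabilizes to an idempotent chain map $\Phi_i^\infty$ whose coefficient at a critical cell $\nu$ in $\Phi_i^\infty(\sigma)$ is the signed sum of multiplicities $m(\gamma)$ over all gradient paths $\gamma$ from $\sigma$ to $\nu$ as in~(\ref{eq:path}); see~\cite{Forman98DMT,MischaikowN13}. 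Set $\varphi_i := \restr{\Phi_i^\infty}{\chain(\A_i)}$ and $\psi_i := \pi_{\A_i} \circ \Phi_i^\infty$, where $\pi_{\A_i}$ is the projection onto the critical cells. Standard Morse-theoretic calculations then show $\psi_i \circ \varphi_i = \id_{\chain(\A_i)}$ and that $V_i$ is a chain homotopy from $\varphi_i \circ \psi_i$ to $\id_{\chain(\clx_i)}$, recovering Forman's isomorphism at the homology level.

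The commutativity of the two squares reduces to the single identity $\restr{V_{i+1}}{\chain(\clx_i)} = V_i$. This is where the conditions of~(\ref{eq:stdmorse}) are needed: by $\Q_i \subseteq \Q_{i+1}$ and $\restr{\w_{i+1}}{\Q_i} = \w_i$, for any $\tau \in \Q_i$ the paired cell $\w_{i+1}(\tau) = \w_i(\tau)$ lies in $\K_i \subseteq \clx_i$, hence $V_{i+1}(\tau) = V_i(\tau)$; and on $\A_i \cup \K_i$ both flows vanish. Since $\chain(\clx_i)$ is stable under $\bo^{\clx_{i+1}}$ (which agrees with $\bo^{\clx_i}$ on $\chain(\clx_i)$ by Definition~\ref{def:fil}), we deduce $\restr{\Phi_{i+1}^\infty}{\chain(\clx_i)} = \Phi_i^\infty$. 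Combined with $\A_i \subseteq \A_{i+1}$ this gives $\restr{\psi_{i+1}}{\chain(\clx_i)} = \psi_i$ and $\restr{\varphi_{i+1}}{\chain(\A_i)} = \varphi_i$, i.e.\ both squares commute. Applying the homology functor then produces a natural isomorphism between the persistence modules $\homfun{\clx_\bullet}$ and $\homfun{\A_\bullet}$, which therefore share the same interval decomposition by Theorem~\ref{thm:zz_gabriel}. The main obstacle is this naturality step: verifying that gradient paths starting from cells of $\clx_i$ cannot escape $\clx_i$, which is precisely what the compatibility conditions~(\ref{eq:stdmorse}) encode. Everything else then follows mechanically from Forman's theorem.
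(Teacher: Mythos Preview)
The paper does not give its own proof of this theorem; it is stated as a background result with attribution to Forman and Mischaikow--Nanda, and then supplemented by Properties~\ref{prop:phi}. Your sketch follows the standard gradient-flow construction from those references and is essentially correct: the naturality argument via $\restr{V_{i+1}}{\chain(\clx_i)} = V_i$ is exactly the right reduction, and your use of the compatibility conditions~(\ref{eq:stdmorse}) to show that gradient paths from $\clx_i$ stay in $\clx_i$ is the key point.

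One small imprecision worth fixing: $V_i$ is a chain homotopy from $\id$ to $\Phi_i$, not from $\id$ to $\varphi_i\circ\psi_i = \Phi_i^\infty$. The homotopy witnessing $\Phi_i^\infty \simeq \id$ is the telescoping sum $V_i(\id + \Phi_i + \cdots + \Phi_i^{N-1})$ for $N$ large enough that $\Phi_i^N = \Phi_i^\infty$. This does not affect the rest of your argument.

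It is also worth noting that the paper's working description of $\psi$ and $\varphi$ (Properties~\ref{prop:phi}) is not your flow formula but rather an ordered product $\prod_{(\tau,\sigma)} \psi_{\tau,\sigma}$ of elementary reductions over Morse pairs. The two descriptions are equivalent, but since the paper's later arguments (Diagrams~(\ref{eq:simplecases}),~(\ref{eq:hardcase}), Lemma~\ref{lem:rm_half_pair}) manipulate $\psi_{\tau,\sigma}$ and $\varphi_{\tau,\sigma}$ directly, you may want to record that your $\psi_i$, $\varphi_i$ factor in this way.
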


Without expressing them explicitly, we use the following properties of the map 
$\psi$ (see~\cite{MischaikowN13} for explicit formulations):

\begin{prop}\label{prop:phi}
	Let $\clx$ be a complex with a Morse matching $\morse{}$. The chain map 
	$\psi \colon \chain(\clx) \to \chain(\A)$ can be expressed as the 
	composition of elementary chain maps over all Morse pairs $(\tau,\sigma)$, 
	taken in an arbitrary order,
	\[
		\psi = \prod_{(\tau,\sigma), \text{ s.t. } \sigma = \w(\tau)} 
		\psi_{\tau,\sigma}\,,
	\]
	where $\psi_{\tau,\sigma} \colon 
	\chain(X') \to \chain(X'\setminus \{\tau,\sigma\})$ is defined on a 
	``partially reduced'' complex $\clx'$ to $\clx' \setminus \{\tau,\sigma\}$, 
	with incidence functions induced by the partial matching. More specifically, 
	$\clx'$ is a Morse complex of $\clx$ for a matching 
	$(\A',\allowbreak \Q',\allowbreak \K', \w')$, such that $\Q' \subseteq \Q$, 
	$\K' \subseteq \K$, and the restriction of $\w$ to $\Q'$ is equal to $\w'$. 
	The complex $\clx' \setminus \{\tau,\sigma\}$ is the Morse complex of $\clx$ 
	with one more Morse pair $(\tau,\sigma)$. The set of Morse pairs already 
	considered in $\Q' \times \K'$ is dependent of the order in which the maps 
	are composed. 

	The map $\psi_{\tau,\sigma}$ satisfies:
	\begin{enumerate}
		\item $\psi_{\tau,\sigma} (\sigma) = 0$,
		\item $\psi_{\tau,\sigma} (\tau)$ is a linear combination of facets 
		of $\sigma$ in $\clx'$, and
		\item $\psi_{\tau,\sigma} (\mu) = \mu$ for all 
		$\mu \neq \sigma, \tau$.
	\end{enumerate} 

	Similarly, the map $\varphi \colon \chain(\A) \to \chain(\clx)$ can be 	
	decomposed into
	\[
		\varphi = \prod_{(\tau,\sigma), \text{ s.t. } \sigma = \w(\tau)} 
		\varphi_{\tau,\sigma} \,,
	\]
	such that $\varphi_{\tau,\sigma} \colon 
	\chain(X'\setminus \{\tau,\sigma\}) \to \chain(X')$ and 
	$\psi_{\tau,\sigma} \colon 
	\chain(X') \to \chain(X'\setminus \{\tau,\sigma\})$ induce isomorphisms at 
	the homology level, that are inverse of each other (defined on the 
	appropriate domain and codomain).
\end{prop}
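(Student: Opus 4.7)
The plan is to prove this by induction on the number of Morse pairs, reducing them one at a time in an order compatible with the acyclicity of the Hasse diagram $\hasse$. The key observation is that the acyclicity of $\hasse$ guarantees that there always exists a Morse pair $(\tau,\sigma)$ whose removal (i.e., collapsing $\sigma$ onto its remaining facets) preserves the partial matching structure on the smaller complex $\clx \setminus \{\tau,\sigma\}$. For instance, one can pick a pair such that $\sigma$ has no downward arrows to other unmatched cells below $\tau$ that would be affected, or more concretely, a pair that is a source or sink of the directed graph $\hasse$ restricted to the remaining matched cells. I would formalize the inductive step first, and then argue the full decomposition follows.

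At the inductive step, given a partially reduced complex $\clx'$ with matching $(\A',\Q',\K',\w')$ and a reducible Morse pair $(\tau,\sigma)$, I would define $\psi_{\tau,\sigma}$ explicitly. Since $\inc{\sigma}{\tau}^{\clx'}$ is a unit in $\PID$, write $\bo^{\clx'} \sigma = \inc{\sigma}{\tau}^{\clx'} \tau + r$, where $r$ is a linear combination of facets of $\sigma$ distinct from $\tau$. Define
\[
	\psi_{\tau,\sigma}(\sigma) = 0, \qquad
	\psi_{\tau,\sigma}(\tau) = -\bigl(\inc{\sigma}{\tau}^{\clx'}\bigr)^{-1} r, \qquad
	\psi_{\tau,\sigma}(\mu) = \mu \text{ for all } \mu \neq \sigma, \tau.
\]
This gives properties (1), (2), (3) directly from the definition. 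I would then verify that $\psi_{\tau,\sigma}$ is a chain map; the non-trivial checks are that $\psi_{\tau,\sigma}(\bo^{\clx'} \sigma) = \bo^{\clx'\setminus\{\tau,\sigma\}}\psi_{\tau,\sigma}(\sigma) = 0$ (which holds because $\bo \sigma$ becomes $\inc{\sigma}{\tau}^{\clx'} \tau + r$, and $\psi_{\tau,\sigma}$ sends this to $0$), and that the identity holds on any cofacet $\nu$ of $\sigma$ or $\tau$, which follows from the complex relation $\bo \bo = 0$ applied in $\clx'$.

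For the dual map, I would define $\varphi_{\tau,\sigma}(\mu) = \mu$ on cells not cofacets of $\tau$ in $\clx'\setminus\{\tau,\sigma\}$, and $\varphi_{\tau,\sigma}(\nu) = \nu - \bigl(\inc{\sigma}{\tau}^{\clx'}\bigr)^{-1}\inc{\nu}{\tau}^{\clx'} \sigma$ on cofacets $\nu$ of $\tau$. This is the natural correction that makes the image of $\varphi_{\tau,\sigma}$ a subcomplex-compatible section, and a direct calculation gives $\psi_{\tau,\sigma} \circ \varphi_{\tau,\sigma} = \id$ on $\chain(\clx' \setminus \{\tau,\sigma\})$. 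For the other composition, $\varphi_{\tau,\sigma} \circ \psi_{\tau,\sigma}$ differs from the identity on $\chain(\clx')$ by an explicit boundary, giving a chain homotopy via $h(\sigma) = \bigl(\inc{\sigma}{\tau}^{\clx'}\bigr)^{-1} \tau$ and $h = 0$ elsewhere, yielding the homology isomorphism.

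Finally, to assemble the global maps, I would compose $\psi_{\tau,\sigma}$ and $\varphi_{\tau,\sigma}$ over all Morse pairs in a fixed acyclic order: the final complex is exactly $\A$ with boundary map $\mbo{\A}$, since the telescoping effect of the substitutions $\tau \mapsto -(\inc{\sigma}{\tau}^{\clx'})^{-1} r$ along the iterations encodes exactly the sum over gradient paths with the multiplicity formula $m(\gamma)$ given in the background section. The main technical obstacle is verifying independence of the specific reduction order (the statement allows arbitrary order): different orders produce different intermediate complexes $\clx'$ but the same final chain map, because any two orders compatible with the acyclicity of $\hasse$ can be connected by transpositions of commuting reductions (pairs $(\tau_1,\sigma_1)$ and $(\tau_2,\sigma_2)$ whose cells are not mutually incident), and one checks that such transpositions yield the same composite map. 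This is where the acyclicity of $\hasse$ plays its essential role, ensuring that the inductive process terminates and is well-defined regardless of the admissible order chosen.
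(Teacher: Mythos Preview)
The paper does not actually prove Properties~\ref{prop:phi}: it is introduced with the sentence ``Without expressing them explicitly, we use the following properties of the map $\psi$ (see~\cite{MischaikowN13} for explicit formulations)'', and no argument is given. So there is no paper proof to compare against; your proposal is in effect a reconstruction of the standard algebraic Morse theory argument underlying~\cite{MischaikowN13}.

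As such, your outline is essentially correct and is the standard one-pair-at-a-time reduction. The explicit definitions of $\psi_{\tau,\sigma}$ and $\varphi_{\tau,\sigma}$ are right, the chain-map checks go through as you say, and the final identification of the composite with the gradient-path boundary $\mbo{\A}$ via telescoping is the usual argument. One genuine slip: your chain homotopy has the wrong direction. A chain homotopy raises degree by one, so you cannot set $h(\sigma)=\bigl(\inc{\sigma}{\tau}^{\clx'}\bigr)^{-1}\tau$, which would lower degree. The correct formula is
\[
  h(\tau)=\bigl(\inc{\sigma}{\tau}^{\clx'}\bigr)^{-1}\sigma,\qquad h=0 \text{ elsewhere},
\]
and then one checks directly that $\id-\varphi_{\tau,\sigma}\psi_{\tau,\sigma}=\bo^{\clx'} h+h\,\bo^{\clx'}$ on both $\tau$ and $\sigma$ (and trivially on all other cells). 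With this correction the homotopy-inverse claim goes through. A more minor point: your treatment of ``arbitrary order'' via transpositions of non-incident pairs is the right idea, but you should be explicit that acyclicity of $\hasse$ is what guarantees every order can be reached this way, and that for incident pairs the two reduction orders still yield the same composite (this is a short direct computation, not automatic).
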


\section{Zigzag Morse filtration and persistence}
\label{sec:hl_algo}
For a zigzag filtration of complexes $\F$, we introduce in this article a 
canonical zigzag filtration $\M$ of Morse complexes admitting the same 
persistent homology. 

\subsection{Zigzag Morse filtration}\label{sec:zzmorsefil}

Without loss of generality, consider the zigzag filtration 
\begin{equation}\label{eq:genzzfil}
	\oF \ := \ \xymatrix @C+8pt{
		 \oclx_1 \ar@{^{(}->}^-{\Sigma_1}[r] 
		& \oclx_2 \ar@{<-^{)}}^-{\Sigma_2}[r] 
		& \enspace \cdots \enspace \ar@{^{(}->}^-{\Sigma_{2k-1}}[r] 
		& \oclx_{2k-1} \ar@{<-^{)}}^-{\Sigma_{2k}}[r] 
		& \oclx_{2k} \\
  	},
\end{equation}
where the $\oclx_i$ are complexes, $\oclx_1 = \oclx_{2k} = \emptyset$, and the 
$i^{th}$ arrow is an inclusion, either forward ($i$ odd) or backward ($i$ even), 
where complexes $\oclx_i$ and $\oclx_{i+1}$ differ by a set of cells $\Sigma_i$ 
(possibly empty). We now further decompose $\oF$.

\paragraph{Atomic operations.} 
For each forward arrow 
$\xymatrix@C-5pt{\bullet_{i} \ar@{->}[r] & \bullet_{i+1}}$, $i$ odd, let 
$\morseh{i}$ be a Morse matching of the set of cells $\Sigma_i$.

Because Morse matchings are acyclic, there exists a total ordering of the cells 
of $\Sigma_i$, compatible with the face partial ordering of $\Sigma_i$, such 
that paired cells in $\morseh{i}$ are consecutive with regard to that order. We 
can consequently decompose a forward inclusion $\oclx_i \subseteq \oclx_{i+1}$ 
into a sequence of inclusions of a single critical cell $\sigma \in \Ah_i$, and 
of inclusions of a single Morse pair of cells 
$(\tau,\sigma) \in \Qh_i \times \Kh_i$, with $\sigma = \wh_i(\tau)$.

For every backward arrow 
$\xymatrix@C-5pt{\bullet_{i} \ar@{<-}[r] & \bullet_{i+1}}$, $i$ even, the Morse 
matchings $\morseh{j}$, for smaller odd indices $j < i$, induce a Morse matching 
on the cells of $\clx_i$. To avoid ambiguity, if a cell is reinserted 
in the filtration after being removed it is considered as a different element.
By restriction, they consequently induce a valid Morse 
matching on all cells of $\Sigma_i$, except on those cells $\sigma \in \Sigma_i$ 
that form a Morse pair $(\tau,\sigma)$, with $\tau \notin \Sigma_i$. We 
decompose backward arrows into a sequence of removals of a single critical cell, 
of removals of a single Morse pair of cells, and of removals of a non-critical 
cell $\sigma$, without its paired cell $\tau \notin \Sigma_i$.

In summary, given an input filtration $\oF$ as above, and the Morse matchings 
$\morseh{i}$, we defined an {\em atomic} zigzag filtration 
\[
	\F \ := \ \ \ \xymatrix @C+8pt{
  		(\emptyset =) \ \clx_1 \ar@{<->}[r] 
  		& \clx_2 \ar@{<->}[r] 
  		& \cdots \ar@{<->}[r] 
  		& \clx_{n-1} \ar@{<->}[r] 
  		& \clx_n \ (= \emptyset) \\
  	},
\]
where all arrows are of the following three types:

\hspace{-0.9cm}
\begin{minipage}{0.26\textwidth}
	\begin{equation}\label{eq:type1}
		\xymatrix@C+5pt{
			\clx \ar@{<->}[r]^-{\sigma} 
			& \clx'
		} 
	\end{equation}
\end{minipage}
\hspace{4px}
\begin{minipage}{0.26\textwidth}
	\vspace{-5px}
	\begin{equation}\label{eq:type2}
		\xymatrix@C+5pt{
			\clx \ar@{<->}[r]^-{\{\tau,\sigma\}} 
			& \clx'
		} 
	\end{equation}
\end{minipage}
\hspace{4px}
\begin{minipage}{0.45\textwidth}
	\vspace{9px}
	\begin{equation}\label{eq:type3}
		\xymatrix@C+5pt{
			\clx \ar@{->}[r]^-{\id} 
			& \clx \ar@{<-^{)}}[r]^-{\sigma} 
			& \clx \setminus \{\sigma\}
		}
	\end{equation}
\end{minipage}

\medskip
\noindent
where $\sigma$ is in each case a maximal cell in $X$, Diagrams~(\ref{eq:type1}) 
and~(\ref{eq:type2}) are forward or backward insertions of a critical cell or a 
Morse pair $(\tau,\sigma)$ of cells, respectively, and Diagram~(\ref{eq:type3}) 
is the removal of the cell $\sigma$ from a Morse pair $(\tau,\sigma)$, where the 
cell $\tau$ is not removed. The identity arrow in this last diagram is a 
technicality that is clarified later. Naturally, one can recover the persistent 
homology of the zigzag filtration $\overline{\F}$ from the one of $\F$. 
We work with $\F$ for the rest of the article.

\paragraph{Morse filtration.}
Given a zigzag filtration $\oF$, Morse matchings $\morse{i}$, and an associated 
atomic filtration $\F$ as above, we define a \emph{zigzag Morse filtration}
\[
	\M \ := \quad 
	\xymatrix @C+8pt{
  		(\emptyset =) \,\A_1 \ar@{<->}[r] 
  		& \A_2 \ar@{<->}[r] 
  		& \cdots \ar@{<->}[r] 
  		& \A_{n-1} \ar@{<->}[r] 
  		& \A_n \,(= \emptyset)\\
  	},
\]
of Morse complexes $(\A_i, \mbo{\A_i})$ of the complexes $(\clx_i,\mbo{\clx_i})$ 
of $\F$ inductively. Note that the maps of the zigzag Morse filtration are not 
all inclusions. Specifically, for a critical cell $\sigma$ in both $\clx_i$ and 
$\clx_{i+1}$, in general $\mbo{\A_i}(\sigma) \neq \mbo{\A_{i+1}}(\sigma)$. 

All $\clx_1, \clx_n, \A_1$ and $\A_n$ are empty complexes. The zigzag Morse 
filtration is constructed inductively for the insertion of a critical cell 
(Diagram~(\ref{eq:type1})) and the insertion of a Morse pair 
(Diagram~(\ref{eq:type2})) as for standard Morse 
filtrations~\cite{MischaikowN13}:
\begin{equation}\label{eq:simplecases}
\begin{gathered}
	\xymatrix @C-5pt @R-10pt{
  		\chain(\clx) \ar@{^{(}->}[r]^-{\sigma'} \ar@{->}[d]_-{\psi}
  		& \chain(\clx \cup \{\sigma'\}) \ar@{->}[d]^-{\psi} \\
  		\chain(\A) \ar@{^{(}->}[r]^-{\sigma'}
  		& \chain(\A \cup \{\sigma'\}) \\
	}
	\hspace{2cm}
	\xymatrix @C-5pt @R-10pt{
  		\chain(\clx) \ar@{^{(}->}[r]^-{\{\tau,\sigma\}} \ar@{->}[d]_-{\psi}
  		& \chain(\clx \cup \{\tau,\sigma\}) 
  			\ar@{->}[d]^-{\psi_{\tau,\sigma} \circ \psi}\\
  		\chain(\A) \ar@{->}[r]^-{\id}
  		& \chain(\A)\enspace, \\
	}
\end{gathered}
\end{equation}
where all horizontal arrows are inclusions of complexes, and in particular the 
boundary maps of $\A$ and $\A \cup \{\sigma'\}$ are equal when restricted to the 
cells of $\A$. The removal of critical cells and Morse pairs is symmetrical. The 
chain maps $\psi$ and $\psi_{\tau,\sigma}$ are the ones of 
Theorem~\ref{thm:filteredmorse} and Properties~\ref{prop:phi}, and are used later. 

For the removal of a non-critical cell $\sigma$ without its paired cell $\tau$ 
(Diagram~(\ref{eq:type3})), which is specific to zigzag persistence, the Morse 
filtration is constructed with:
\begin{equation}\label{eq:hardcase}
\begin{gathered}
  	\xymatrix @C-5pt @R-10pt{
    	\chain(\clx) \ar@{->}[r]^-{\id} 
    		\ar@{->}[d]_-{\psi_{\tau,\sigma} \circ \psi}
    	& \chain(\clx) \ar@{<-^{)}}[r]^-{\sigma} \ar@{->}[d]^-{\psi}
    	& \chain(\clx \setminus \{\sigma\}) \ar@{->}[d]^-{\psi} \\
    	\chain(\A,\bo) \ar@{->}[r]^-{\varphi_{\tau,\sigma}}
    	& \chain(\A \cup \{\tau,\sigma\}, \bo') \ar@{<-^{)}}[r]^-{\sigma}
    	& \chain(\A \cup \{\tau\}, \bo'') \enspace . \\
  	}
\end{gathered}
\end{equation}
The main technicality is that the boundary maps $\bo$ and $\bo'$ differ in a non 
trivial way, that we study in Section~\ref{sec:boundary}. The map $\bo''$ is 
equal to the restriction of $\bo'$ to the critical cells $\A \cup \{\tau\}$ (the 
right arrow is a backward inclusion of complexes). The chain maps 
$\psi_{\tau,\sigma}$ and $\varphi_{\tau,\sigma}$ are the ones from 
Theorem~\ref{thm:filteredmorse} and Properties~\ref{prop:phi}, and $\psi$ is the 
compositions of all maps $\psi_{\mu, \w(\mu)}$ over the Morse pairs $(\mu,\w(\mu))$ 
of the Morse matching of $\clx$, except the pair $(\tau,\sigma)$. We give an example 
of zigzag Morse filtration in Figure~\ref{fig:easy_ex}.

\begin{figure}[t]
	\centering
	\includegraphics[width=12cm]{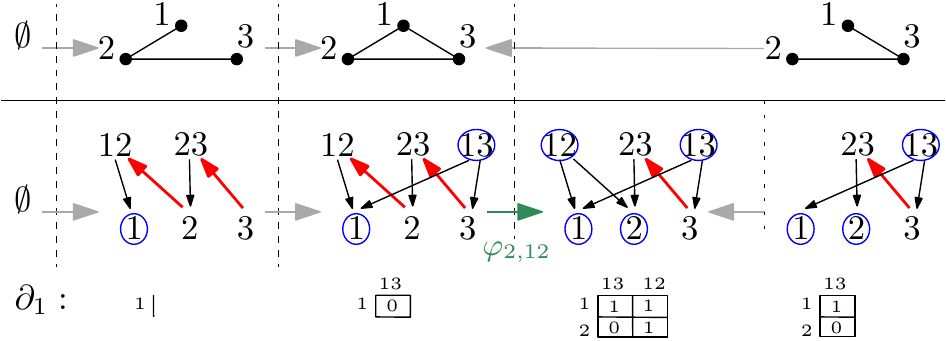}
	\caption{Zigzag filtration (top) and its Morse filtration (bottom), given by 
	Hasse diagrams and (Morse) boundary maps. Upward arrows in Hasse diagrams 
	represent Morse matchings, critical faces are circled. Note that the rightmost 
	operation illustrates Diagram~(\ref{eq:hardcase}), with a non trivial 
	modification of $\partial_1(\{1,3\})$.}
	\label{fig:easy_ex}
\end{figure}

Diagrams~(\ref{eq:simplecases}) are studied in~\cite{MischaikowN13}. We now 
focus on the study of Diagram~(\ref{eq:hardcase}).

\begin{remark}
	Note that a key point for the proofs of theorems in~\cite{MischaikowN13} 
	is that filtered Morse complexes in standard persistence satisfy 
	$(\A_i,\bo) \subset (\A_{i+1},\bo)$. This fact also allows the standard 
	persistent homology algorithm~\cite{EdelsbrunnerLZ02,ZomorodianC05} to 
	work directly for filtered Morse complexes. This property is not satisfied 
	by zigzag Morse filtrations, which explains why our approach is more 
	atomic than the one of~\cite{MischaikowN13} (see 
	Section~\ref{sec:isomodules}), and that we have to design a new homology 
	matrix algorithm to implement operation~(\ref{eq:hardcase}) (see 
	Sections~\ref{sec:boundary} and~\ref{sec:ll_algo}).
\end{remark}

\subsection{Isomorphism of zigzag modules}\label{sec:isomodules}

Theorem~\ref{thm:filteredmorse} implies that the atomic operations of 
Diagrams~(\ref{eq:simplecases}) induce commuting diagrams in homology, with 
vertical maps being isomorphisms as proved in~\cite{MischaikowN13}: 

\begin{lemma} \label{lem:simple_cases}
	Let $\clx$ be a complex and $\morse{}$ a Morse complex obtained 
	from $\clx$. Let $\sigma'$ be a cell, and $(\tau,\sigma)$ a Morse pair, 
	such that $(\A \cup \{\sigma'\},\allowbreak \Q,\allowbreak \K)$ and 
	$(\A,\allowbreak \Q \cup \{\tau\},\allowbreak \K \cup \{\sigma\})$ are 
	valid Morse complexes. Then there exist isomorphisms $\psi_*$ and 
	$(\psi_{\tau,\sigma})_*$ such that the following diagrams commute:
	\[
		\xymatrix @C-5pt @R-10pt{
			\hf{\clx} \ar@{->}[r]^-{\sigma'_*} \ar@{->}[d]_-{\psi_*}
			& \hf{\clx \cup \{\sigma'\}} \ar@{->}[d]^-{\psi_*} \\
			\hf{\A} \ar@{->}[r]^-{\sigma'_*}
			& \hf{\A \cup \{\sigma'\}} \\
		}
		\hspace{2cm}
		\xymatrix @C-5pt @R-10pt{
			\hf{\clx} \ar@{->}[r]^-{\sigma_* \,\circ\, \tau_*} 
				\ar@{->}[d]_-{\psi_*}
			& \hf{\clx \cup \{\tau,\sigma\}} 
				\ar@{->}[d]^-{(\psi_{\tau,\sigma})_* \circ \psi_*} \\
			\hf{\A} \ar@{->}[r]^-{\id}
			& \hf{\A} \\
		}
	\]
	where $\sigma'_*$ and $\sigma_* \circ \tau_*$ are the maps induced at 
	homology level by the insertion of $\sigma'$ and $\{\tau,\sigma\}$ 
	respectively.
	The maps $\psi_*$ and $(\psi_{\tau,\sigma})_*$ are the isomorphisms 
	induced by chain maps $\psi$ and $\psi_{\tau,\sigma}$ of discrete Morse theory 
	(see Theorem~\ref{thm:filteredmorse}).
\end{lemma}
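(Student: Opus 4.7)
The plan is to show that both diagrams of~(\ref{eq:simplecases}) already commute at the \emph{chain} level, whence commutativity in homology is automatic; the isomorphism claims then follow from Theorem~\ref{thm:filteredmorse} and from the last clause of Properties~\ref{prop:phi}.

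For the critical-cell case, the Morse matchings of $\clx$ and $\clx \cup \{\sigma'\}$ share the same pairing $\w \colon \Q \to \K$. By Properties~\ref{prop:phi}, the right vertical $\psi$ factors as $\prod_{(\mu,\w(\mu))} \psi_{\mu,\w(\mu)}$ over precisely the pairs used by the left vertical. Since $\sigma'$ is critical and hence distinct from $\mu$ and $\w(\mu)$ for every such pair, item~(3) of Properties~\ref{prop:phi} yields $\psi_{\mu,\w(\mu)}(\sigma') = \sigma'$. Thus the right vertical agrees with the left vertical on $\chain(\clx)$ and sends $\sigma'$ to $\sigma'$, so the square commutes cell-by-cell.

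For the Morse-pair case, I expand the right vertical using Properties~\ref{prop:phi} with an ordering that places $\psi_{\tau,\sigma}$ \emph{last}, so that the right vertical factors as $\psi_{\tau,\sigma} \circ \tilde{\psi}$, where $\tilde{\psi}$ ranges over exactly the Morse pairs in $\Q \times \K$. On chains in $\chain(\clx) \subseteq \chain(\clx \cup \{\tau,\sigma\})$, the factor $\tilde{\psi}$ coincides with the left vertical $\psi$ and lands in $\chain(\A)$, whose generating cells are neither $\tau$ nor $\sigma$. Item~(3) of Properties~\ref{prop:phi} then gives $\psi_{\tau,\sigma}(\psi(c)) = \psi(c)$ for every $c \in \chain(\clx)$, which matches the identity on the bottom row.

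Passing to homology, Theorem~\ref{thm:filteredmorse} supplies that each $\psi_*$ is an isomorphism, and the last clause of Properties~\ref{prop:phi} supplies that $(\psi_{\tau,\sigma})_*$ is an isomorphism with inverse $(\varphi_{\tau,\sigma})_*$. The principal point demanding care, and the likely main obstacle in a formal write-up, is the choice of ordering in the product decomposition of the right vertical map in the Morse-pair case: it is exactly this freedom, guaranteed by Properties~\ref{prop:phi}, which lets me isolate $\psi_{\tau,\sigma}$ and invoke item~(3) to see that it acts trivially on the image of the restricted reduction.
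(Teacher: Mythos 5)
Your argument is correct in outline, but it is worth noting that the paper does not actually prove this lemma at all: it is stated as an immediate specialization of Theorem~\ref{thm:filteredmorse} (the Mischaikow--Nanda filtered Morse complex result), since the two squares of Diagram~(\ref{eq:simplecases}) are exactly the atomic steps of a \emph{standard} Morse filtration in the sense of Equation~(\ref{eq:stdmorse}), and the commuting squares with vertical isomorphisms are imported wholesale from~\cite{MischaikowN13}. Your chain-level verification via the product decomposition of Properties~\ref{prop:phi} is therefore a more self-contained route, and the freedom to order the elementary factors so that $\psi_{\tau,\sigma}$ comes last is indeed the right lever. The one step you assert without justification --- and which is the actual mathematical content here --- is that the intermediate reduction of a chain $c \in \chain(\clx)$, computed inside $\clx \cup \{\sigma'\}$ or $\clx \cup \{\tau,\sigma\}$, never acquires support on the newly inserted cells, so that item~(3) of Properties~\ref{prop:phi} applies and $\psi_{\tau,\sigma}$ (resp.\ each $\psi_{\mu,\w(\mu)}$) acts as the identity on it. This needs an argument: a priori $\psi_{\mu,\w(\mu)}(\mu)$ is a combination of facets of $\w(\mu)$ in a partially reduced complex, and one must rule out $\tau$ appearing among them. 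It does follow from the standing maximality hypotheses of the atomic filtration: $\sigma'$ and $\sigma$ are maximal in the new complex and $\tau$ has no cofacet other than $\sigma$ (else $\clx$ would not be a subcomplex), so no gradient path emanating from a cell of $\clx$ can reach $\sigma'$, $\sigma$, or $\tau$, and the partially reduced incidence functions restricted to $\clx$ coincide with those computed in $\clx$ alone. With that observation added, your proof is complete; without it, the claim that $\tilde{\psi}$ ``lands in $\chain(\A)$'' is exactly the point a referee would press on.
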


We prove the following lemma, which is specific to our zigzag Morse filtration.

\begin{lemma} \label{lem:rm_half_pair}
	Let $\clx$ be a complex and $\morse{}$ a Morse complex obtained 
	from $\clx$. Let $\sigma$ be a maximal cell of $\clx$ not in $\A$, which 
	therefore forms a Morse pair with a cell $\tau$, 
	$\inc{\sigma}{\tau}^\clx \neq 0$. There exist isomorphisms $\psi_*$, 
	$(\psi_{\tau,\sigma})_*$, and $(\varphi_{\tau,\sigma})_*$ such that the 
	following diagram commutes:
	\[
		\xymatrix @C-5pt @R-10pt{
			\hf{\clx} \ar@{->}[r]^-{\id} 
				\ar@{->}[d]_-{(\psi_{\tau,\sigma})_* \circ \psi_*}
			& \hf{\clx} \ar@{<-}[r]^-{\sigma_*} \ar@{->}[d]^-{\psi_*}
			& \hf{\clx \setminus \{\sigma\}} \ar@{->}[d]^-{\psi_*} \\
			\hf{\A} \ar@{->}[r]^-{(\varphi_{\tau,\sigma})_*}
			& \hf{\A \cup \{\tau,\sigma\}} \ar@{<-}[r]^-{\sigma_*}
			& \hf{\A \cup \{\tau\}} \\
		}
	\]
	where $\sigma_*$ is the map induced at homology level by the removal of 
	$\sigma$. 
	The maps $\psi_*$, $(\psi_{\tau,\sigma})_*$, and 
	$(\varphi_{\tau,\sigma})_*$ are the isomorphisms induced at homology level 
	by, respectively, the chain maps $\psi$, $\psi_{\tau,\sigma}$, and 
	$\varphi_{\tau,\sigma}$ of discrete Morse theory 
	(see Theorem~\ref{thm:filteredmorse}).
\end{lemma}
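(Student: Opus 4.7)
The plan is to verify the diagram square by square, reducing to Theorem~\ref{thm:filteredmorse} and Properties~\ref{prop:phi} wherever possible, and to handle the right square by a direct chain-level computation that exploits the maximality of $\sigma$.

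First I would identify the three vertical maps and check that they are isomorphisms. The middle $\psi_*$ is the map associated to the partial Morse matching on $\clx$ obtained by omitting only the pair $(\tau,\sigma)$, whose Morse complex is $(\A \cup \{\tau,\sigma\}, \bo')$; the right $\psi_*$ is the one associated to the induced Morse matching on $\clx \setminus \{\sigma\}$, in which $\tau$ has become critical, whose Morse complex is $(\A \cup \{\tau\}, \bo'')$. Both are isomorphisms by Theorem~\ref{thm:filteredmorse}. The left vertical $(\psi_{\tau,\sigma})_* \circ \psi_*$ is then the composition of the middle isomorphism with the isomorphism $(\psi_{\tau,\sigma})_*$ provided by Properties~\ref{prop:phi}, hence an isomorphism. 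Commutativity of the left square is then immediate: the top horizontal arrow is the identity, and Properties~\ref{prop:phi} gives $(\varphi_{\tau,\sigma})_* \circ (\psi_{\tau,\sigma})_* = \id$ on $\hf{\A \cup \{\tau,\sigma\}}$ (with boundary $\bo'$), so
\[
	(\varphi_{\tau,\sigma})_* \circ \bigl((\psi_{\tau,\sigma})_* \circ \psi_*\bigr) \;=\; \psi_*.
\]

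The main obstacle is the right square, which I would establish at the chain level so that commutativity descends to homology. By construction both the middle $\psi$ and the right $\psi$ are compositions, in the same arbitrarily fixed order, of the \emph{same} elementary maps $\psi_{\mu,\w(\mu)}$ ranging over all Morse pairs with $\mu \neq \tau$; only the ambient complex in which they are applied differs. Each such elementary map is the identity on cells other than $\mu$ and $\w(\mu)$, and by Property~(2) of Properties~\ref{prop:phi} it sends $\mu$ to a linear combination of facets of $\w(\mu)$ in the relevant partially reduced complex. Because $\sigma$ is maximal in $\clx$, a short gradient-path argument shows that $\sigma$ remains a facet of no cell in any partial reduction of $\clx$ (any gradient path ending at $\sigma$ would require some step landing in $\sigma$, forcing $\sigma$ to lie in the boundary of some cell of $\clx$, contradicting maximality). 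Consequently no elementary map can ever introduce a $\sigma$-component into a chain that does not already contain one. An induction on the number of elementary factors, applied to a chain $c \in C(\clx \setminus \{\sigma\}) \subseteq C(\clx)$, then yields $\psi(c) \in C(\A \cup \{\tau\}) \subseteq C(\A \cup \{\tau,\sigma\})$ with the same value on both sides. This is the chain-level commutativity of the right square, which passes to the homology-level statement and completes the proof.
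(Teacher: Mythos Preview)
Your proof is correct and follows the same two-square outline as the paper. The left square is handled identically in both: $(\varphi_{\tau,\sigma})_* \circ (\psi_{\tau,\sigma})_* = \id$ gives commutativity immediately.

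For the right square the paper is much terser: it just says this square ``commutes, being induced by horizontal inclusions,'' implicitly invoking Theorem~\ref{thm:filteredmorse} for the one-step standard filtration $\clx \setminus \{\sigma\} \subseteq \clx$ equipped with the matching that omits the pair $(\tau,\sigma)$ (so $\tau$ is critical on the left, $\tau$ and $\sigma$ on the right). Your chain-level argument --- that maximality of $\sigma$ prevents it from ever being a facet in any partial reduction, so the elementary maps $\psi_{\mu,\w(\mu)}$ coincide on $C(\clx\setminus\{\sigma\})$ --- is exactly the verification that the hypotheses of Equation~(\ref{eq:stdmorse}) hold and that the two $\psi$'s agree as in Theorem~\ref{thm:filteredmorse}. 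So the content is the same; you have simply unpacked what the paper cites. If you want to match the paper's economy, you can replace your inductive chain-level paragraph by a one-line appeal to Theorem~\ref{thm:filteredmorse}.
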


\begin{proof}
	Apply the homology functor to Diagram~(\ref{eq:hardcase}). The right square 
	commutes, being induced by horizontal inclusions. Because the maps induced at 
	homology level by $\psi_{\tau,\sigma}$ and $\varphi_{\tau,\sigma}$ are 
	isomorphisms, inverse of each other (see Theorem~\ref{thm:filteredmorse}), we get
	$
		(\varphi_{\tau,\sigma})_* \circ (\psi_{\tau,\sigma})_* \circ \psi_* 
		= \psi_* 
	$
	and the left square commutes.
\end{proof}

We conclude,

\begin{theorem} \label{thm:equivalence}
	The zigzag filtrations $\F$ and $\M$ have same persistent homology.
\end{theorem}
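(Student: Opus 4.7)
The plan is to exhibit an isomorphism of zigzag modules between the persistence modules induced by $\F$ and $\M$; together with Theorem \ref{thm:zz_gabriel}, such an isomorphism forces identical interval decompositions and hence equal persistent homology. To this end, I would use the chain maps $\psi_i \colon \chain(\clx_i) \to \chain(\A_i)$ from Proposition \ref{prop:phi} as vertical maps at each vertex of the filtration. By Theorem \ref{thm:filteredmorse}, each $(\psi_i)_*$ is an isomorphism on homology, with $(\varphi_i)_*$ as its inverse. For the intermediate complex $(\A \cup \{\tau,\sigma\}, \bo')$ that appears in Diagram (\ref{eq:hardcase}), I would take $\psi_{\tau,\sigma} \circ \psi$, which is again an isomorphism on homology by Proposition \ref{prop:phi}.

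Next, I would verify that these vertical maps intertwine every arrow of $\F$ and $\M$. Because $\F$ is atomic, it suffices to check the three elementary types of arrow of Diagrams (\ref{eq:type1}), (\ref{eq:type2}), and (\ref{eq:type3}), together with the backward analogs of the first two. Insertions and removals of a critical cell (type (\ref{eq:type1})) and of a Morse pair (type (\ref{eq:type2})) are dispatched by Lemma \ref{lem:simple_cases}: the removal cases follow from the symmetric version of the lemma obtained by reversing horizontal arrows in its diagrams, which is legitimate since both horizontal arrows are inclusions and the proof through Theorem \ref{thm:filteredmorse} applies identically. The third case, removal of a non-critical cell $\sigma$ without its paired cell $\tau$ (type (\ref{eq:type3})), is exactly the two-square diagram handled by Lemma \ref{lem:rm_half_pair}; in particular, the identity arrow on the $\F$-side corresponds to the non-trivial homology isomorphism induced by $\varphi_{\tau,\sigma}$ on the $\M$-side.

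Combining these, the collection $\{(\psi_i)_*\}$, augmented by the intermediate vertex map $(\psi_{\tau,\sigma})_* \circ \psi_*$ at each instance of Diagram (\ref{eq:hardcase}), defines an isomorphism of zigzag modules $H(\F) \cong H(\M)$, and Theorem \ref{thm:zz_gabriel} then yields that the interval decompositions on both sides agree. The only potentially delicate point has already been absorbed into the supporting lemmas: the genuinely new content beyond the standard filtered Morse complex is the half Morse pair removal, whose commuting squares on the $\M$-side require the boundary change from $\bo$ to $\bo'$ mediated by $\varphi_{\tau,\sigma}$. Once Lemma \ref{lem:rm_half_pair} is in place, the remainder of the argument is a routine assembly across the atomic steps.
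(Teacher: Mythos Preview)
Your proposal is correct and follows essentially the same route as the paper: vertical isomorphisms $(\psi_i)_*$ at each vertex, commutativity of each atomic square via Lemma~\ref{lem:simple_cases} (for types~(\ref{eq:type1}) and~(\ref{eq:type2})) and Lemma~\ref{lem:rm_half_pair} (for type~(\ref{eq:type3})), and then assembly into an isomorphism of zigzag modules. One small slip to fix: the vertical map at the vertex with Morse complex $(\A \cup \{\tau,\sigma\}, \bo')$ is $\psi$, not $\psi_{\tau,\sigma} \circ \psi$ (the latter lands in $\A$, as in the leftmost column of Diagram~(\ref{eq:hardcase})); since you correctly cite Lemma~\ref{lem:rm_half_pair}, which has the maps in the right places, this is only a labeling issue.
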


\begin{proof}
	Applying the homology functor to $\F$ and $\M$, we get the zigzag modules
	\[
		\xymatrix @C-5pt @R-10pt{
			\hf{\F} \,:
			& \hf{\clx_0} \ar@{<->}[r] \ar@{->}[d]^-{\psi_*^0}
			& \hf{\clx_1} \ar@{<->}[r] \ar@{->}[d]^-{\psi_*^1}
			& \cdots \ar@{<->}[r]
			& \hf{\clx_m} \ar@{->}[d]^-{\psi_*^m} \\
			\hf{\M} \,:
			& \hf{\A_0} \ar@{<->}[r]
			& \hf{\A_1} \ar@{<->}[r]
			& \cdots \ar@{<->}[r]
			& \hf{\A_m} \\
		}
	\]
	where, by construction, every $\A_i$ is a Morse complex of $\clx_i$, and the 
	$\psi_*^i$ are the isomorphisms induced by the chain maps 
	$\psi^i \colon \chain(\clx_i) \to \chain(\A_i)$, connecting a complex and 
	its Morse reduction (Theorem~\ref{thm:filteredmorse}). 
  	By Theorem~\ref{thm:filteredmorse} and Lemma~\ref{lem:rm_half_pair}, all 
  	squares commute and are compatible with each other, and the $(\psi_*^i)$ 
  	define an isomorphism of zigzag modules.
\end{proof}

\section{Boundary of the Morse complex}
\label{sec:boundary}
Referring to Diagram~(\ref{eq:hardcase}), let $\clx$ be a complex with incidence 
function $\inc{\cdot}{\cdot}^\clx$, together with a Morse matching $\morse{}$, 
inducing an orientation of the Hasse diagram $\hasse$ of the complex, and a 
Morse complex $(\A, \partial)$. 

In this section, we track the evolution of the boundary operators in 
Morse complexes under the evaluation of the map 
$\varphi_{\tau,\sigma} \colon (\A, \bo) \to (\A \cup \{\tau,\sigma\}, \bo')$ 
from Diagram~(\ref{eq:hardcase}). Both complexes are Morse complexes of the same 
$\clx$, whose matchings differ by exactly one pair $(\tau,\sigma)$, i.e., 
the Morse partition of complex $\A \cup \{\tau, \sigma\}$ is 
$(\A \cup \{\tau,\sigma\}) \sqcup (\Q \setminus \{\tau\}) \sqcup 
(\K \setminus \{\sigma\})$. We denote this last complex by $(\A',\bo')$, with 
incidence function $\inc{\cdot}{\cdot}^{\A'}$ in the following. 
We prove:

\begin{lemma} \label{lem:nbo}
	Let $\nu$ be a cell of the complex $(\A,\bo)$. Then, in the complex 
	$(\A', \bo')$,  
	\begin{equation}
    	\bo'(\nu) = 
	      	\bo(\nu) + 
      		\left(\inc{\sigma}{\tau}^{\clx}\right)^{-1} \inc{\nu}{\tau}^{\A'} 
      		\cdot \bo' \sigma.
  	\end{equation}
\end{lemma}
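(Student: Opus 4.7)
The plan is to compare the Morse boundary coefficients $\inc{\nu}{\mu}^{\A}$ and $\inc{\nu}{\mu}^{\A'}$ by decomposing the gradient paths in the two Hasse diagrams $\hasse$ (for the matching $\w$, in which $(\tau,\sigma)$ is a pair) and $\hasse'$ (for the matching $\w|_{\Q\setminus\{\tau\}}$, in which both $\tau$ and $\sigma$ are critical). These two diagrams coincide except for the single arrow between $\tau$ and $\sigma$, upward in $\hasse$ and downward in $\hasse'$. A dimension check reduces the proof to the case $\dim \nu = \dim \sigma$: otherwise $\inc{\nu}{\tau}^{\A'} = 0$, no gradient path from $\nu$ can traverse $(\tau,\sigma)$, and the identity collapses to $\bo \nu = \bo' \nu$.

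The first step is a bijective path decomposition. I would argue that every $\gamma \in \Gamma(\nu,\mu)$ in $\hasse$ either avoids both $\tau$ and $\sigma$ -- in which case it is, edge for edge, a gradient path of $\hasse'$ contributing to $\inc{\nu}{\mu}^{\A'}$ -- or traverses the pair $(\tau,\sigma)$ exactly once, factoring uniquely as $\gamma = \gamma_1 \cdot (\tau \to \sigma) \cdot \gamma_2$ with $\gamma_1 \in \Gamma(\nu,\tau)$ and $\gamma_2 \in \Gamma(\sigma,\mu)$ in $\hasse'$, treating $\tau$ and $\sigma$ as the critical endpoints they are in $\A'$. A direct inspection of the multiplicity formula gives $m(\gamma) = -(\inc{\sigma}{\tau}^\clx)^{-1} m(\gamma_1) m(\gamma_2)$: the extra sign comes from the one additional matched pair in $\gamma$, and the extra factor from the corresponding upward arrow $\tau \to \sigma$. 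Summing over $\Gamma(\nu,\mu)$ in $\hasse$ yields
\[
	\inc{\nu}{\mu}^\A \;=\; \inc{\nu}{\mu}^{\A'} \;-\; \left(\inc{\sigma}{\tau}^\clx\right)^{-1} \inc{\nu}{\tau}^{\A'} \, \inc{\sigma}{\mu}^{\A'}.
\]

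The main obstacle lies in the $\tau$-coefficient that will appear when I expand $\bo'\sigma = \sum_{\mu \in \A_{d-1}} \inc{\sigma}{\mu}^{\A'}\mu + \inc{\sigma}{\tau}^{\A'}\tau$ inside the candidate identity. I would therefore establish the key lemma $\inc{\sigma}{\tau}^{\A'} = \inc{\sigma}{\tau}^\clx$ by an acyclicity argument: any gradient path from $\sigma$ to $\tau$ in $\hasse'$ other than the direct one-edge path would, edge for edge, exist as a directed path from $\sigma$ to $\tau$ in $\hasse$ too, since no arrow other than $\sigma \leftrightarrow \tau$ is altered and every matching arrow it uses belongs to $\w|_{\Q\setminus\{\tau\}} \subset \w$; appending the upward arrow $\tau \to \sigma$ present in $\hasse$ would then form a directed cycle, contradicting the acyclicity required by the Morse matching. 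Hence only the single edge $\sigma \to \tau$ contributes to $\inc{\sigma}{\tau}^{\A'}$, with multiplicity $\inc{\sigma}{\tau}^\clx$.

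With this identity in hand the proof is assembled as follows. Expanding $\bo'(\nu) = \sum_{\mu \in \A_{d-1}} \inc{\nu}{\mu}^{\A'}\mu + \inc{\nu}{\tau}^{\A'}\tau$ and using the path identity derived above, the $\mu$-coefficients of $\bo(\nu) + (\inc{\sigma}{\tau}^\clx)^{-1} \inc{\nu}{\tau}^{\A'} \bo'\sigma$ for $\mu \in \A_{d-1}$ match those of $\bo'(\nu)$ directly, while the $\tau$-coefficient contributed by $(\inc{\sigma}{\tau}^\clx)^{-1} \inc{\nu}{\tau}^{\A'} \bo'\sigma$ equals $(\inc{\sigma}{\tau}^\clx)^{-1} \inc{\nu}{\tau}^{\A'} \inc{\sigma}{\tau}^{\A'} = \inc{\nu}{\tau}^{\A'}$ thanks to the key lemma, which is exactly the $\tau$-coefficient of $\bo'(\nu)$. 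This yields the claimed identity.
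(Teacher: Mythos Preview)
Your proposal is correct and follows essentially the same route as the paper: both decompose the gradient paths of $\hasse$ contributing to $\bo\nu$ into those avoiding the edge $\tau\to\sigma$ (yielding the $\A'$-boundary of $\nu$, minus its $\tau$-component) and those traversing it once (factoring as $\gamma_1\cdot(\tau\to\sigma)\cdot\gamma_2$ with the same multiplicativity of $m$), and then collect terms. Your explicit isolation of the ``key lemma'' $\inc{\sigma}{\tau}^{\A'}=\inc{\sigma}{\tau}^{\clx}$ via acyclicity of $\hasse$ is exactly what makes the $\tau$-coefficients balance; the paper uses the same fact but leaves it implicit when writing $(\star_2)=\bo'\sigma-\inc{\sigma}{\tau}^{\clx}\cdot\tau$. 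One small point you should also make explicit, as the paper does, is the role of the hypothesis that $\sigma$ is maximal in $\clx$: it guarantees that no gradient path in $\hasse'$ starting at $\nu\neq\sigma$ can reach $\sigma$, which is what ensures that the ``avoiding'' paths in $\hasse$ are \emph{exactly} the $\hasse'$-paths from $\nu$ to $\mu\in\A$, and that every concatenation $\gamma_1\cdot(\tau\to\sigma)\cdot\gamma_2$ is indeed simple.
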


\begin{proof}
	First, note that $\sigma$ is maximal in $\clx$, and so it is maximal in 
	$\A \cup \{\tau,\sigma\}$.

	Let $\hasse$ and $\hasse'$ be the Hasse diagrams of $\clx$ induced by the 
	Morse matchings of $\A$ and $\A'$, respectively. Because the matchings differ 
	by a single Morse pair $(\tau,\sigma)$, $\hasse$ and $\hasse'$ only differ by 
	the orientation of the edge $\tau \leftrightarrow \sigma$. 

	For a critical cell $\nu \in \A$, we have:
	\[
		\bo \nu  
		= \sum_{\substack{\mu \in \A \\ \gamma \in \Gamma(\nu,\mu)}} 
			m(\gamma) \cdot \mu  
		= \underbrace{
			\sum_{
				\substack{
					\mu \in \A, \\ 
					\gamma \in \Gamma_{\tau \rightarrow \sigma}(\nu,\mu)
				}
			} m(\gamma) \cdot \mu
		}_{(\star)} \ + 
		\underbrace{
			\sum_{
				\substack{
					\mu \in \A, \\ 
					\gamma \in \Gamma_{\tau \nrightarrow \sigma}(\nu,\mu)
				}
			} m(\gamma) \cdot \mu
		}_{\bo' \nu - \inc{\nu}{\tau}^{\A'} \cdot \tau},
	\]
	where $\Gamma_{\tau \rightarrow \sigma}(\nu,\mu)$ are the gradient paths 
	from $\nu$ to $\mu$ in $\hasse$ containing the upward arrow 
	$\tau \rightarrow \sigma$, and 
	$\Gamma_{\tau \nrightarrow \sigma}(\nu,\mu)$ are the ones not containing it.
	Assume $\tau$ is of dimension $d$, and $\sigma$ of dimension $d+1$. 

	Because $\sigma$ is critical in $\A'$, it has no ingoing arrow from cells of 
	dimension $d$ in $\hasse'$. Consequently, 
	$\Gamma_{\tau \nrightarrow \sigma}(\nu,\mu)$ contains exactly all gradient 
	paths from $\nu$ to $\mu \neq \tau$ in $\hasse'$. Hence, the 
	sum over $\Gamma_{\tau \nrightarrow \sigma}(\nu,\mu)$, for $\mu \in \A$, 
	gives $\bo' \nu - \inc{\nu}{\tau}^{\A'} \tau$. Note that $\sigma$ 
	cannot appear in $\bo' \nu$ because $\sigma$ is maximal by hypothesis.  

	Now, studying the left term $(\star)$, and splitting gradient paths passing 
	through edge $(\tau,\sigma)$, then factorizing, we get
	\begin{align*}
		(\star) 
		&= \sum_{
			\substack{
				\mu \in \A, \\ 
				\gamma_1 \in \Gamma(\nu,\tau), \\ 
				\gamma_2 \in \Gamma(\sigma,\mu)
			}
		} m(\gamma_1) \cdot 
			\left(-\inc{\sigma}{\tau}^{\clx}\right)^{-1} m(\gamma_2) \,\cdot\, \mu \\
		&= -\left(\inc{\sigma}{\tau}^{\clx}\right)^{-1} 
		\underbrace{
			\sum_{\mu \in \A} \left( 
				\sum_{\gamma_2 \in \Gamma(\sigma,\mu)} m(\gamma_2) \cdot \mu 
			\right)
		}_{(\star_2) 
			\ = \ \bo' \sigma - \inc{\sigma}{\tau} \cdot \tau
		} \cdot 
		\underbrace{
			\left(\sum_{\gamma_1 \in \Gamma(\nu,\tau)} m(\gamma_1) \right)
		}_{(\star_1)}.
	\end{align*}
	The sum $(\star_1)$ over $\Gamma(\nu,\tau)$ is independent of $\mu$, 
	and equal to $\inc{\nu}{\tau}^{\A'}$ by definition. 

	Because $\tau$ is critical in $\A'$, it has no outgoing arrow towards cells 
	of dimension $d+1$ in $\hasse'$. Consequently, $\Gamma(\sigma, \mu)$ 
	contains exactly all gradient paths from $\sigma$ to $\mu$ in $\hasse'$, 
	where $\mu \neq \tau$. Hence, the sum $(\star_2)$ over $\Gamma(\sigma, \mu)$ 
	gives $\bo' \sigma - \inc{\sigma}{\tau}^{\clx} \cdot \tau$. 

	Finally, putting terms together, the following allows us to conclude:
	\begin{align*}
  		\bo \nu 
  			&= \left(\bo' \nu - \inc{\nu}{\tau}^{\A'} \cdot \tau\right) 
  				- \frac{\inc{\nu}{\tau}^{\A'}}{\inc{\sigma}{\tau}^{\clx}} 
  				\left(\bo' \sigma - \inc{\sigma}{\tau}^{\clx} \cdot \tau \right)\\
			&= \bo' \nu 
				- \left(\inc{\sigma}{\tau}^{\clx}\right)^{-1} 
				\inc{\nu}{\tau}^{\A'} \bo' \sigma.
	\end{align*}
\end{proof}

\section{Persistence algorithm for zigzag Morse complexes}
\label{sec:ll_algo}
We describe in this section, our implementation of the algorithm to compute the persistence diagram of a zigzag Morse filtration as defined in Section~\ref{sec:hl_algo}. It consists of adapting the zigzag persistence algorithm~\cite{MariaO15}, used in our experiments, to our Morse framework, relying on the results of Sections~\ref{sec:hl_algo} and~\ref{sec:boundary}. Our approach could be adapted for implementing algorithm~\cite{CarlssonS10,CarlssonSM09}.

\subsection{Zigzag Persistence algorithm}\label{ssec:stand_alg}

We first recall the algorithms for computing zigzag persistence.

\paragraph{Existing zigzag persistence algorithms.}
There are currently two practical%
\footnote{
	Putting aside~\cite{MilosavljevicMS11}, which is essentially of theoretical 
	nature.
} 
approaches to compute zigzag persistent 
homology~\cite{CarlssonS10,CarlssonSM09,MariaO15}. They can both  be formulated 
in a unified framework~\cite{MariaO16}. Given an input zigzag filtration:
\begin{equation}\label{eq:zz_filtration_2}
	\xymatrix @C-5pt{
		\clx_1 \ar@{->}^-\subseteq[r] 
		& \clx_2 \ar@{<-}^-\supseteq[r] 
		& \cdots \ar@{->}^-\subseteq[r] 
		& \clx_{n-1} \ar@{<-}^-\supseteq[r] 
		& \clx_n
	},
\end{equation}
both algorithms are iterative. At step $i$ of the computation, they maintain a 
homology basis of $H(\clx_i)$ that is \emph{compatible} (defined later) with the 
interval decomposition of the zigzag module associated to a zigzag filtration of the 
form
\begin{equation}\label{eq:zz_filtration_algo}
	\xymatrix @C-5pt{
		\clx_{1} 
		& \clx_{2} \ar@{<->}[l] \ar@{<->}[r] 
		& \cdots 
		& \clx_{i} \ar@{<->}[l] 
		& \ar@{<->}[l] \clx'_{i+1}  
		& \cdots \ar@{<->}[l] 
		& \clx'_{i+m-1} \ar@{<->}[l] \ar@{<->}[r] 
		& \clx'_{i+m}
	},
\end{equation}
The first $i$ complexes and $i-1$ maps in~(\ref{eq:zz_filtration_2}) 
and~(\ref{eq:zz_filtration_algo}) are identical, and the remaining complexes 
and maps of~(\ref{eq:zz_filtration_algo}) are algorithm dependent. 
Both algorithms consist of updating a homology basis in order to maintain its 
compatibility when operating (a subset of) the following three local 
transformations of the zigzag filtration/module in sequence:

\begin{center}
\begin{minipage}{0.48\textwidth}
	\begin{equation}\label{eq:reflection_diamond}
	\begin{gathered}
		\xymatrix @=20pt @C=15pt @R-2pc{
			& \clx\cup\{\sigma\}
			& \\
			\leftrightarrow 
			\clx \ar@{->}[dr]_-{\id} \ar@{^{(}->}[ur]^-{\sigma} 
			&  
			& \clx \ar@{->}[dl]^-{\id} \ar@{_{(}->}[ul]_-{\sigma} 
			\leftrightarrow, \\
			& \clx
			& \\
		}
	\end{gathered}
	\end{equation}
\end{minipage}
\begin{minipage}{0.48\textwidth}
	\begin{equation}\label{eq:reflection_diamond_inv}
	\begin{gathered}
		\xymatrix @=20pt @C=15pt @R-2pc{
			& \clx \cup \{\sigma\}
			& \\
			\leftrightarrow 
			\clx \ar@{<-}[dr]_-{\id} \ar@{<-^{)}}[ur]^-{\sigma} 
			&  
			& \clx \ar@{<-}[dl]^-{\id} \ar@{<-_{)}}[ul]_-{\sigma} 
			\leftrightarrow, \\
			& \clx
			& \\
		}
	\end{gathered}
	\end{equation}
\end{minipage}
\begin{minipage}{0.6\textwidth}
	\begin{equation}\label{eq:transposition_diamond}
	\begin{gathered}
		\xymatrix @=20pt @C=15pt @R-2pc{
			& \clx \cup \{\tau\}
			& \\
			\leftrightarrow 
			\clx\cup\{\sigma,\tau\} \ar@{<-^{)}}[ur]^-{\sigma} 
				\ar@{<-^{)}}[dr]_-{\tau} 
			& 
			& \clx \ar@{_{(}->}[ul]_-{\tau} \ar@{_{(}->}[dl]^-{\sigma} 
			\leftrightarrow, \\
			& \clx \cup \{\sigma\}
			& \\
		}
	\end{gathered}
	\end{equation}
\end{minipage}
\end{center}

\medskip
\noindent
where each arrow represents the insertion of a cell. These transformations 
are called \emph{reflection diamonds} for~(\ref{eq:reflection_diamond}) 
and~(\ref{eq:reflection_diamond_inv}), and \emph{transposition diamonds} 
for~(\ref{eq:transposition_diamond}), and their effect on the interval 
decomposition of the zigzag module have been characterized for general zigzag 
filtrations of complexes in~\cite{MariaO16,MariaO15}. 

We now focus on the algorithm introduced in~\cite{MariaO15} that we use in our experiments.

\paragraph{The zigzag algorithm of~\cite{MariaO15}.}
Let $\F \,:\, 
\xymatrix @C-5pt{
	\clx_{1}
	& \clx_{2} \ar@{<->}[l] \ar@{<->}[r] 
	& \cdots 
	& \clx_{n} \ar@{<->}[l]
}$ be the input zigzag filtration, where \emph{all} arrows are forward or backward inclusions of a single cell. Let $\F_j$ be:
\[
	\xymatrix @C-7pt{
		\clx_{1}
		& \clx_{2} \ar@{<->}[l] \ar@{<->}[r] 
		& \cdots 
		& \clx_{j} \ar@{<->}[l]
		& \clx'_{j+1} \ar@{->}[l]_-{\sigma_1}
		& \cdots \ar@{->}[l]_-{\sigma_2}
		& \clx'_{j+m-1} \ar@{->}[l]_-{\sigma_{m-1}}
		& \clx'_{j+m} = \emptyset \ar@{->}[l]_-{\sigma_m}.
	}
\]
For indices $1 \leq p \leq q \leq n$, denote by $\mathcal{Z}[p;q]$ the restriction of a filtration $\mathcal{Z}$ to spaces of indices $i \in [p;q]$, and maps between them.

Passing from filtration $\F_j$ to filtration $\F_{j+1}$ using reflection and 
transposition diamonds consists of the following:

\begin{enumerate}[itemsep=\medskipamount]
	\item If $\xymatrix{\clx_{j}\ar[r]^-{\sigma} &\clx_{j+1}}$ is forward in 
	$\F$, define $\F_{j+1}$ to be
  	\[
		\xymatrix @C-7pt{
			\clx_{1} \ar@{<->}[r] 
			& \cdots 
			& \clx_j \ar@{<->}[l] \ar@{->}[r]^-{\sigma}
			& \clx_{j+1} \ar@{<-}[r]^-{\sigma}
			& \clx_{j} 
			& \clx'_{j+1} \ar@{->}[l]_-{\sigma_1}
			& \cdots \ar@{->}[l]_-{\sigma_2}
			& \clx'_{j+m} = \emptyset \ar@{->}[l]_-{\sigma_m}
		}.
  	\]
	Considering $\oF_j$ to be $\F_j$ with two extra identity arrows,
  	\[
  		\oF_j :\,
		\xymatrix @C-7pt{
			\clx_{1} \ar@{<->}[r] 
			& \cdots 
			& \clx_j \ar@{<->}[l] \ar@{->}[r]^-{\id}
			& \clx_{j} \ar@{<-}[r]^-{\id}
			& \clx_{j} 
			& \clx'_{j} \ar@{->}[l]_-{\sigma_1}
			& \cdots \ar@{->}[l]_-{\sigma_2}
			& \clx'_{j+m} = \emptyset \ar@{->}[l]_-{\sigma_m}
		},
  	\]
	we have that $\oF_j$ and $\F_{j+1}$ are related by a reflection diamond 
	(Diagram~(\ref{eq:reflection_diamond})) at $\clx_j$. Studying the effect of 
	a reflection diamond on homology, algorithm~\cite{MariaO15} updates a 
	homology matrix (defined below in this framework) at $\clx_j$, compatible 
	with $\F_j$ (and also $\oF_j$), into a homology matrix at $\clx_{j+1}$, 
	compatible with $\F_{j+1}$ defined above.
	
	\item If $\xymatrix{\clx_{j} \ar@{<-}[r]^-{\sigma} &\clx_{j+1}}$ is backward 
	in $\F$, there exists an index $\ell$ such that $\sigma = \sigma_{\ell}$ in 
	the part $\F_j[j;j+m]$ of the filtration $\F_j$. Define $\F_{j+1}$ to be
	\[
  		\xymatrix @-7pt{
			\cdots \clx_{j} \ar@{<-}[r]^-{\sigma_{\ell} = \sigma} 
			& \clx_{j+1} \ar@{<-}[r]^-{\sigma_1} 
			& \clx'_{j+1} \setminus \{\sigma\} 
				\cdots \ar@{<-}[r]^-{\sigma_{\ell - 2}} 
			& \clx'_{j+\ell-2} \setminus\{\sigma\} 
				\ar@{<-}[r]^-{\sigma_{\ell - 1}} 
			& \clx'_{j+\ell} \ar@{<-}[r]^-{\sigma_{\ell + 1}} 
			& \cdots\\
 		},
	\]
	where the removal of $\sigma = \sigma_\ell$ has been moved all the way up to 
	$\clx_i$. This can be attained by applying successively transposition 
	diamonds (Diagram~(\ref{eq:transposition_diamond})) in $\F_j[j;j+m]$, in 
	order to obtain $\F_{j+1}$. Studying the effect of transposition diamonds on 
	homology, algorithm~\cite{MariaO15} updates a homology matrix at $\clx_j$, 
	compatible with $\F_j$, into a homology matrix at $\clx_{j+1}$, compatible 
	with $\F_{j+1}$ defined above.
\end{enumerate}

\subsection{Adaptation to zigzag Morse filtrations}\label{ssec:morse_alg}

Using notations from Section~\ref{sec:hl_algo}, let $\oF$ be a general zigzag 
filtration:
\[
	\oF \ := \quad \xymatrix @C+8pt{
  		(\emptyset =) \,\oclx_1 \ar@{^{(}->}^-{\Sigma_1}[r] 
  		& \oclx_2 \ar@{<-^{)}}^-{\Sigma_2}[r] 
  		& \enspace \cdots \enspace \ar@{^{(}->}^-{\Sigma_{2k-1}}[r] 
  		& \oclx_{2k-1} \ar@{<-^{)}}^-{\Sigma_{2k}}[r] 
  		& \oclx_{2k} \,(= \emptyset)\\
  	}
\]
together with Morse matchings $\morse{i}$ on the set of cells $\Sigma_i$ of 
every \emph{forward} inclusion 
$\xymatrix{\oclx_{i} \ar@{->}[r]|-{\, \Sigma_i \, } & \oclx_{i+1}}$, $i$ odd. 

Let $\F$ be the associated \emph{atomic} zigzag filtration of complexes where 
all maps are forward or backward inclusions of a single cell: 
$
	\xymatrix @C-7pt{
		\F = \ \clx_{1} \ar@{<->}[r] 
		& \cdots 
		& \clx_{n} \ar@{<->}[l]
}$.

Algorithm~\cite{MariaO15} can update a homology matrix for a general complex 
using reflection and transposition diamonds to implement the insertion and 
deletion of cells pictured in Diagrams~(\ref{eq:simplecases}).
We now implement the operation of Diagram~(\ref{eq:hardcase}), introducing the 
chain map $\varphi_{\tau,\sigma}$.

At step $j$ of the algorithm, we maintain a zigzag Morse filtration $\M_j$ for 
the filtration $\F_j$. At space $\clx_j$, the filtration satisfies:

\begin{prop}[Zigzag Morse filtration $\M_j$]\label{prop:zzmorse}
	\ \\ 
	\vspace{-18px}
	\begin{enumerate}[itemsep=\smallskipamount]
  		\item The filtration $\M_j[1;j]$ is a general zigzag Morse filtration 
  		(defined in Section~\ref{sec:zzmorsefil}) for $\F[1;j]$ and its 
  		Morse matchings $\{\morse{i}\}_{i = 1 \ldots j}$,
  		\item the filtration $\M_j[j;j+m]$ is a standard Morse filtration 
  		(defined in~\cite{MischaikowN13} and Equation~(\ref{eq:stdmorse})) 
  		for the \emph{standard} filtration $\F_j[j;j+m]$.
	\end{enumerate}
\end{prop}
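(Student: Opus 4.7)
The plan is to prove this proposition by induction on $j$, treating the two properties as an invariant maintained by the algorithm as it processes atomic arrows of $\F$. At the base case $j=1$, we have $\clx_1 = \emptyset$, so the left part $\M_1[1;1]$ trivially satisfies property~(1). To initialize the right part $\M_1[1; 1+m]$, I would construct a standard Morse filtration by processing every arrow of $\F$ in reverse order, viewing each backward arrow of $\F$ (a removal) as a forward insertion when traversed from $\clx_n = \emptyset$ back towards $\clx_1$. Each such insertion is of a critical cell (type~(\ref{eq:type1})), of a Morse pair (type~(\ref{eq:type2})), or corresponds to the identity-plus-removal pattern of Diagram~(\ref{eq:type3}), and the Morse matchings $\morse{i}$ provided by $\oF$ ensure that Equation~(\ref{eq:stdmorse}) is satisfied throughout.

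For the inductive step, assume the invariant holds for $\M_j$ and construct $\M_{j+1}$ according to the algorithm's move. If the $j$-th atomic arrow of $\F$ is forward, the algorithm applies a reflection diamond at $\clx_j$ (Diagram~(\ref{eq:reflection_diamond})). To transport this to $\M_j$, I extend the left part $\M_j[1;j]$ to $\M_{j+1}[1;j+1]$ using the appropriate construction from Section~\ref{sec:zzmorsefil}: Diagram~(\ref{eq:simplecases}) for the insertion of a critical cell or of a Morse pair, and Diagram~(\ref{eq:hardcase}) for type~(\ref{eq:type3}). The resulting extension is a valid zigzag Morse filtration by construction, so property~(1) is preserved. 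The new right part is the old right part prepended with the backward mirror of the reflection, which is an atomic arrow of the same kind; this preserves the compatibility conditions of Equation~(\ref{eq:stdmorse}), so property~(2) is preserved.

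If the $j$-th arrow of $\F$ is backward, the algorithm uses a sequence of transposition diamonds (Diagram~(\ref{eq:transposition_diamond})) to bubble the corresponding removal from its current position $\ell$ in $\F_j[j;j+m]$ up to position $j+1$. I would show that each transposition preserves the standard Morse filtration structure of the right part: since the Morse matchings are fixed globally by $\oF$ and each atomic forward insertion of the right part adds either a critical cell or a complete Morse pair, swapping two consecutive insertions only requires that neither involves a face of the other, which holds because each cell is maximal in its newly inserted complex. Once the removal reaches position $j+1$, it is absorbed into the left part via the atomic construction of Section~\ref{sec:zzmorsefil}, restoring both properties for $\M_{j+1}$.

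The main obstacle is the transposition-diamond step: verifying that swapping two atomic insertions in the right part preserves both the matching-restriction condition $\restr{\w_{i+1}}{\Q_i} = \w_i$ and the compatibility $\restr{\mbo{\A_{i+1}}}{\A_i} = \mbo{\A_i}$ of Equation~(\ref{eq:stdmorse}). The resolution relies on the fact that the Morse matching is invariant under reordering insertions: if both cells are critical, the Morse partition is unaffected; if they form a joint Morse pair, the pair insertion is atomic and no intermediate state is created; and if they are paired with cells outside the swap, the fixed global matching forces both orderings to produce the same $\w$. Combined with the locality of boundary maps on Morse complexes, this guarantees that the standard Morse filtration conditions persist under every transposition, completing the induction.
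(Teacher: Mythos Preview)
Your inductive strategy matches the paper's: Properties~\ref{prop:zzmorse} is not given a self-contained proof there either, but is justified as an algorithmic invariant, with the inductive step sketched in the paragraph beginning ``By induction, let $\M_j$ be\ldots''. However, several of your details are off, and one essential ingredient is missing.

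First, the base case. At $j=1$ we have $\clx_1=\emptyset$, hence $m=0$ and $\F_1[1;1+m]$ is the single empty complex; there is nothing to build ``by processing every arrow of $\F$ in reverse order''. The right part $\F_j[j;j+m]$ is not the future of the input filtration $\F$: it is an auxiliary descent from $\clx_j$ down to $\emptyset$, assembled one cell at a time by the reflection diamonds as $j$ grows. Your description conflates $\F_j$ with $\F$.

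Second, Diagram~(\ref{eq:type3}) is never a forward arrow; it is a backward removal of a non-critical cell whose Morse partner stays. It therefore belongs to your backward case, not the forward one, and it is precisely the case that forces the passage through $\bM_j$ and the map $\varphi_{\tau,\sigma}$ of Diagram~(\ref{eq:bigdiag}). Your backward discussion handles only transpositions and the ``absorption into the left part'', without singling out the type-(\ref{eq:type3}) removal as the one operation that changes the Morse boundary and requires the extra argument of Section~\ref{sec:boundary}.

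The most substantive gap is your justification that the right part remains a \emph{standard} Morse filtration after transpositions. You assert that ``each atomic forward insertion of the right part adds either a critical cell or a complete Morse pair'', and that ``the pair insertion is atomic and no intermediate state is created''. But in $\F_j[j;j+m]$ each arrow inserts a \emph{single} cell, so a Morse pair $(\tau,\sigma)$ occupies two consecutive arrows only if those two cells happen to be adjacent in the ordering. That adjacency is exactly what Lemma~\ref{lem:simpleprop} provides: because $\tau$ and $\sigma$ were inserted consecutively in $\F[1;j]$, the reflection diamonds place their removals consecutively in $\F_j[j;j+m]$. Without invoking this lemma, your claim that the pair is ``atomic'' is unjustified, and Equation~(\ref{eq:stdmorse}) --- in particular $\Q_i\subseteq\Q_{i+1}$, $\K_i\subseteq\K_{i+1}$ --- could fail at the intermediate index. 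The paper makes this dependence explicit: ``Lemma~\ref{lem:simpleprop} ensures that $\tau$ and $\sigma$, that are consecutively inserted \ldots, are consecutively removed in $\F_j[j;j+m]$''. Once you add this lemma to your argument, the rest of your transposition analysis (cases (a)--(c)) goes through, and your induction closes.
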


Before exhibiting the filtrations, we prove the following simple property of the 
zigzag persistence algorithm,

\begin{lemma}\label{lem:simpleprop}
	Let $\tau, \sigma$ be cells of $\clx_j$, and let 
	$\xymatrix @C-5pt{\clx_{p} \ar@{->}[r]^-{\tau} & \clx_{p+1}}$ and 
	$\xymatrix @C-5pt{\clx_{q} \ar@{->}[r]^-{\sigma} & \clx_{q+1}}$ be the two 
	maps in $\F$ that have the largest indices $1 \leq p,q < j$ for which a 
	forward inclusion of $\tau$ and $\sigma$, respectively, happens in $\F[1;j]$. 

	Let 
	$\xymatrix{\clx'_{j+r-1} \ar@{<-}[r]^-{\tau} & \clx'_{j+r}}$ and 
	$\xymatrix{\clx'_{j+s-1} \ar@{<-}[r]^-{\sigma} & \clx'_{j+s}}$, for indices 
	$1 \leq r,s \leq m$, be the backward inclusions of $\tau$ and $\sigma$ in 
	the part $\F_j[j;j+m]$ of the filtration $\F_j$. Then,
	\[
  		p < q \quad \text{ iff } \quad s < r.
	\]
	In other words, if $\tau$ is inserted before $\sigma$, it is removed after 
	$\sigma$.
\end{lemma}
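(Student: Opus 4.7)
\medskip
\noindent\textbf{Proof proposal.} The plan is to prove, by induction on $j$, the following stronger stack-like invariant on the algorithm: for every step $j$, and for every pair of cells $\tau,\sigma$ in $\clx_j$, if $p,q$ denote the largest indices $<j$ at which a forward insertion of $\tau$, resp.\ $\sigma$, occurs in $\F[1;j]$, then in the backward tail $\F_j[j;j+m]$ the cell with the \emph{larger} insertion index is removed first. The lemma is exactly the equivalence $p<q \Leftrightarrow s<r$ extracted from this invariant. Intuitively, the tail behaves as a LIFO stack: the most recently inserted (or most recently touched) cell is always on top.

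The base case is the initial step (say $j=1$ with $\clx_1=\emptyset$), where there is nothing to check. For the inductive step, I would consider the two ways the algorithm passes from $\F_j$ to $\F_{j+1}$, as described in Section~\ref{ssec:stand_alg}.

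\textbf{Forward case:} the $j$-th map of $\F$ inserts a cell $\nu$. By the algorithm's construction, the tail of $\F_{j+1}$ is obtained from that of $\F_j$ by prepending a backward inclusion of $\nu$. Hence $\nu$ occupies position $1$ of the new tail, and its insertion index is $j+1$, larger than any other cell's; so the invariant holds for every pair $(\nu,\mu)$ with $\mu \in \clx_j$. For pairs $(\mu_1,\mu_2)$ with both $\mu_1,\mu_2\neq\nu$, their relative position in the new tail is inherited from $\F_j$, and the invariant follows from the induction hypothesis.

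\textbf{Backward case:} the $j$-th map of $\F$ removes a cell $\nu$. The algorithm moves the backward inclusion of $\nu$ from its position $\ell$ in the tail of $\F_j$ up to position $1$, via a sequence of transposition diamonds (Diagram~(\ref{eq:transposition_diamond})) applied to adjacent pairs. Since $\nu\notin\clx_{j+1}$, the invariant in $\F_{j+1}$ only concerns pairs of cells distinct from $\nu$; for such a pair, the relative order in the tail is unchanged when $\nu$ is extracted. Again the invariant transfers from $\F_j$ to $\F_{j+1}$.

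The only step requiring care is observing that each transposition diamond in the backward case swaps only two adjacent removals and leaves the rest of the tail untouched; iterating $\ell-1$ such swaps thus preserves the relative order of all cells other than $\nu$. Once this is established, the invariant passes inductively to all $j$, and the lemma follows immediately by specializing to the particular cells $\tau,\sigma$ and the indices $p,q,r,s$ of its statement.
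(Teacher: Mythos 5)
Your proof is correct and follows essentially the same route as the paper's: the paper's (much terser) proof likewise observes that the only new arrows in the tail come from the reflection diamond at index $j$, which prepends the removal of the freshly inserted cell to the front of $\F_j[j;j+m]$, while transposition diamonds only extract the removed cell without disturbing the relative order of the others. You have simply made explicit the induction and the two cases that the paper delegates to the description of the algorithm in~\cite{MariaO15}.
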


\begin{proof}
	The only ``new'' arrows in the diagram are brought by the reflection 
	diamonds~(\ref{eq:reflection_diamond}) applied at index $j$ of the 
	algorithm, on $\F_j$, which induces the desired symmetry in forward and 
	backward arrows for the insertion of a given cell. We refer 
	to~\cite{MariaO15} for details on the algorithm.
\end{proof}

Now, consider the following diagram, where $(\tau,\sigma)$ are cells of $\clx_j$ 
which are paired in the Morse matching of $\clx_j$ induced by the Morse 
matchings $\{\morse{i}\}_{i = 1 \ldots j}$ of the filtration,
\begin{adjustbox}{minipage=1.2\textwidth,margin=0pt 8pt 0pt 0pt,center}
\begin{equation}\label{eq:bigdiag}
  	\begin{gathered}
  	\xymatrix @C-7pt @R-10pt{
		\F_j: \ 
			\ar@{<->}[r]|-{\cdots}  
    	& \clx_j  
    		\ar@{->}[r]^-{\id} 
    		\ar@{->}[d]^-{\psi_{\tau,\sigma} \circ \psi} 
    	& \clx_j 
    		\ar@{<-}[r]
    	 	\ar@{->}[d]^-{\psi} 
    	& \clx_{j+1}' 
    		\ar@/_20pt/[ll] 
    		\ar@{<-}[r]|-{\cdots} 
    		\ar@{->}[d]^-{\psi} 
    	& \clx'_{j+r} 
    		\ar@{<-}[r] 
    		\ar@{->}[d]^-{\psi} 
    	& \clx,\sigma,\tau 
    		\ar@{<-}[r]^-{\sigma} 
    		\ar@{->}[d]^-{\psi} 
    	& \clx,\tau 
    		\ar@{<-}[r]^-{\tau} 
    		\ar@{->}[d]^-{\psi} 
    	& \clx 
    		\ar@{<-}[r] 
    		\ar@/_20pt/[ll]|-{\ \{\sigma,\tau\} \ } 
    		\ar@{->}[d]^-{\psi} 
   	 	& \clx_{j+r-2} 
   	 		\ar@{<-}[r]|-{\cdots} 
   	 		\ar@{->}[d]^-{\psi} 
    	& \\
    	\bM_j: \ 
			\ar@{<->}[r]|-{\cdots}
   	 	& \A_j 
   	 		\ar@{->}[r]_-{\varphi_{\tau,\sigma}} 
   	 		\ar@{->}[d]^-{\id} 
    	& \A_j,\sigma,\tau 
    		\ar@{<-}[r] 
    		\ar@{->}[d]^-{\psi_{\tau,\sigma}}
    	& \A'_{j+1},\sigma,\tau 
    		\ar@{<-}[r]|-{\cdots} 
    		\ar@{->}[d]^-{\psi_{\tau,\sigma}}
    	& \A'_{j+r},\sigma,\tau 
    		\ar@{<-}[r] 
    		\ar@{->}[d]^-{\psi_{\tau,\sigma}}
    	& \A,\sigma,\tau 
    		\ar@{<-}[r]_-{\sigma} 
    		\ar@{->}[d]^-{\psi_{\tau,\sigma}}
    	& \A,\tau 
    		\ar@{<-}[r]_-{\tau}
    	& \A 
    		\ar@{<-}[r] 
    		\ar@{->}[d]^-{\id}
    	& \A'_{j+r-2} 
    		\ar@{<-}[r]|-{\cdots} 
    		\ar@{->}[d]^-{\id} 
    	& \\
		\M_j: \ 
			\ar@{<->}[r]|-{\cdots}
    	& \A_j 
    		\ar@{->}[r]
    	& \A_j 
    		\ar@{<-}[r]
    	& \A'_{j+1} 
    		\ar@{<-}[r]|-{\cdots} 
    	& \A'_{j+r} 
    		\ar@{<-}[r]
    	& \A 
    		\ar@{<-}[rr]
    	& \ 
    	& \A 
    		\ar@{<-}[r]
    	& \A'_{j+r-2} 
    		\ar@{<-}[r]|-{\cdots} 
    	& \\
  	}
  	\end{gathered}
\end{equation}
\end{adjustbox}
where arrows without label are simple inclusions of complexes. Simplifying 
notations, we denote by $\clx$ the complex $\clx'_{j+r-1}$, by $\A$ the complex 
$\A'_{j+r-1}$, and union of a complex and some cells by $\clx,\sigma,\tau$, 
instead of $\clx \cup \{\sigma,\tau\}$. We use this diagram until the end of the 
section, and define its various components progressively.

Lemma~\ref{lem:simpleprop} ensures that $\tau$ and $\sigma$, that are 
consecutively inserted (Morse pair, Diagram~(\ref{eq:type2})), are consecutively 
removed in $\F_j[j;j+m]$, as pictured above. The filtration $\F_j$ appears on 
top, where two arrows (curved horizontal) are further decomposed for 
convenience. 

By induction, let $\M_j$ be the zigzag Morse filtration maintained by the 
algorithm at step $j$, and satisfying Properties~\ref{prop:zzmorse}. Performing 
reflection diamonds~(\ref{eq:reflection_diamond}) at index $j$, and 
transposition diamonds~(\ref{eq:transposition_diamond}) at indices 
$j + r$, $r > 0$, maintains the Properties~\ref{prop:zzmorse}. Consequently, at 
the level of the zigzag Morse filtration, the zigzag algorithm~\cite{MariaO15} 
can implement insertions and deletions of critical cells 
(Diagrams~(\ref{eq:simplecases})) with no further modification, while 
maintaining a Morse filtration $\M_j \mapsto \M_{j+1}$ satisfying the 
algorithmic invariant Properties~\ref{prop:zzmorse}.

The only obstruction to using the zigzag persistence algorithm is the operation 
introduced in Diagram~(\ref{eq:hardcase}). Consequently, consider the next 
operation in $\F$ to be the removal 
$\xymatrix{\clx_{j} \ar@{<-}[r]|-{\, \sigma \, } & \clx_{j+1}}$ of a 
non-critical cell $\sigma$, paired with a cell $\tau$ in the Morse matching of 
$\clx_j$, such that $\tau$ is not removed.
The cell $\sigma$ cannot be ``directly removed'' as it does not appear in 
$\M_j[j;j+m]$. We focus the rest of this section to the definition and study of 
the zigzag Morse filtration $\bM_j$ of Diagram~(\ref{eq:bigdiag}). 

Let $\bM_j$ be as above, where the map $\varphi_{\tau,\sigma}$ is the map 
defined in Diagram~(\ref{eq:hardcase}), and the chain maps $\psi$ between $\F_j$ 
and $\bM_j$ are the ones of Diagrams~(\ref{eq:simplecases}) 
and~(\ref{eq:hardcase}). By Theorem~\ref{thm:equivalence}, these maps induce 
an isomorphism of zigzag modules $\Hom(\F_j) \to \Hom(\bM_j)$, and the 
filtrations have same persistent homology. Additionally, $\bM_j$ is a zigzag 
Morse filtration, and a standard Morse filtration from space $\A_j,\sigma,\tau$ 
on to the right, i.e., it satisfies Properties~\ref{prop:zzmorse}. Finally, 
$\sigma$ is critical in $\A_j,\sigma,\tau$, and can be removed with the zigzag 
persistence algorithm to obtain $\M_{j+1}$. 

\paragraph{Compatible homology matrix.} 
We design an algorithm to turn a homology matrix at $\A_j$, compatible with 
$\M_j$, into a homology matrix at $\A_j \cup \{\tau,\sigma\}$, compatible with 
$\bM_j$, in Diagram~(\ref{eq:bigdiag}).

Consider $\clx_j$ in $\F_j$ (Diagram~(\ref{eq:bigdiag})), containing $m$ cells:

\begin{definition}[\cite{DBLP:journals/corr/abs-1107-5665}]\label{def:hommat}
	Let $\clx$ be a cell complex of size $m$ and $\mtx = \{c_0, \ldots , c_{m-1} \}$ 
	be a collection of $m$ chains of $C(\clx)$. We say that $\mtx$ is a 
	\emph{homology matrix} at $\clx$ if there exists an ordering 
	$\sigma_0, \ldots, \sigma_{m-1}$ of the $m$ cells of $\clx$ such that:
	\begin{enumerate}[start=0]
		\item for all $0 \leq r < m$, the restriction 
		$\{\sigma_0, \ldots, \sigma_r\} \subset \clx$ is a subcomplex of $\clx$,
		\item for all $0 \leq r < m$, the leading term of $c_r$ is $\sigma_r$ for 
		the chosen ordering, i.e., 
		$c_r = \varepsilon_0 \sigma_0 + \ldots + \varepsilon_{r-1} \sigma_{r-1} 
		+ \sigma_r$, for some $\varepsilon_i \in \field$,
	\end{enumerate}
	and there exists a partition $\{0, \ldots, m-1\} = F \sqcup G \sqcup H$, and 
	a bijective pairing $G \leftrightarrow H$, satisfying:
	\begin{enumerate}[start=2]
		\item for all indices $f \in F$, $\bo^{\clx_j} c_f = 0$,
		\item for all pairs $g \leftrightarrow h$ of $G \times H$, 
		$\bo^{\clx_j} c_h = c_g$.
	\end{enumerate}
\end{definition}

This data encodes~\cite{DBLP:journals/corr/abs-1107-5665} the persistent 
homology of the (standard) filtration $\F_j[j;j+m]$. In particular, the homology 
groups of $\clx_j$ are equal to $\langle [c_f] : f \in F \rangle$. It is 
convenient to see this data as a matrix $M_{\mtx}$ with cycle $c_i$ as $i^{th}$ 
column, expressed in the basis $\{\sigma_i\}_{i = 1 \ldots m}$ for rows. In this 
case, condition~(1) of the definition is equivalent to the matrix being upper 
triangular, with no zero entry in the diagonal.

Additionally,

\begin{definition}[\cite{MariaO15}]\label{def:compatible}
	We denote by $\bigoplus_\ell \Imod[b_\ell;d_\ell]$ the interval decomposition 
	of $\Hom(\F_j)$. A homology matrix $\mtx = \{c_0, \ldots , c_{m-1} \}$ at 
	$\clx_j$ is {\em compatible} with the filtration $\F_j$ iff there exists a 
	zigzag module isomorphism 
	$\Phi \colon \Hom(\F) \to \bigoplus_\ell \Imod[b_\ell;d_\ell]$ such that 
	$\Phi_j \colon \Hom(\clx_j) \to \field^{|F|}$ sends $\{[c_f] : f \in F\}$ to 
	the canonical basis of $\field \times \cdots \times \field$.
\end{definition}

The Morse theory algorithm for persistent homology of~\cite{MischaikowN13} can 
be applied to maintain a compatible homology matrix for a Morse filtration under 
the operations pictured in Diagrams~(\ref{eq:simplecases}). We design the update 
for the new operation of Diagram~(\ref{eq:hardcase}). Consider:
\[
	\xymatrix @C-5pt{
		\M_j: \ \A_{1} \ar@{<->}[r] & \cdots & \A_{j} \ar@{<->}[l]\\
	} 
	\quad \text{and} \quad
	\xymatrix @C-5pt{
		\F_j: \ \clx_{1} \ar@{<->}[r] & \cdots & \clx_{j} \ar@{<->}[l]\\
	},
\]
such that $\M_j$ is a zigzag Morse filtration for $\F_j$. Assume $\A_j$ has $m$ 
cells, and let $\mtx = \{c_0, \ldots , c_{m-1}\}$ be a homology matrix at $\A_j$ 
compatible with $H(\M_j)$. Following Diagram~(\ref{eq:hardcase}), consider: 
\[
	\xymatrix @C-10pt{
		\bM_j: \,\A_{1} \ar@{<->}[r] 
		& \cdots 
		& \A_{j} \ar@{<->}[l] \ar@{->}[r]
		& \A_j \cup \{\tau,\sigma\}
	} 
	\enspace \text{and} \enspace
	\xymatrix @C-10pt{
		\oF_j: \,\clx_{1} \ar@{<->}[r] 
		& \cdots 
		& \clx_{j} \ar@{<->}[l] \ar@{->}[r]^-{\id} 
		& \clx_j
	}
\]
such that $\bM_j$ is a zigzag Morse filtration for $\oF_j$. From $\mtx$, we 
define a homology matrix $\bmtx := \{c'_0, \ldots , c'_{m-1},c_\tau, c_\sigma\}$ 
at $\A_j \cup \{\tau,\sigma\}$ that is compatible with $H(\bM_j)$.

Denote the two last complexes and their boundary maps in $\bM_j$ by $(\A_j, \bo)$ 
and $(\A_j' , \bo')$, with $\A_j' := \A_j \cup \{\tau,\sigma\}$. Then:
\begin{itemize}
	\item for all indices $i \in F \sqcup H$, define 
	\[
		c'_i := c_i - \left(\inc{\sigma}{\tau}^{\clx_j}\right)^{-1} 
		\left(\sum_{\nu \in c_i} \inc{\nu}{\tau}^{\A'} \right) \cdot \sigma,
	\]
	where the sum is taken over all cells $\nu$ in the support of chain $c_i$,
	
	\item define $c_\tau := \bo' \sigma$, and $c_\sigma := \sigma$, and put the 
	index of $c_\tau$ in $G$, the index of $c_\sigma$ in $H$, and pair them 
	together,
	
	\item the pairing $G \leftrightarrow H$ inherited from $\mtx$ remains 
	unchanged, and so does $F$.
\end{itemize}

\begin{lemma}\label{lem:updategiveshommat}
	The collection $\bmtx$ is a homology matrix at $\A_j \cup \{\tau,\sigma\}$ in 
	Diagram~(\ref{eq:bigdiag}).
\end{lemma}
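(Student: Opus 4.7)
The plan is to verify the four conditions of Definition~\ref{def:hommat} for $\bmtx$, after fixing a suitable linear ordering on the cells of $\A_j \cup \{\tau,\sigma\}$. Concretely, I keep the ordering of $\A_j$ inherited from $\mtx$, insert $\tau$ immediately after the last cell of $\A_j$ of dimension $\leq \dim\tau = d$, and insert $\sigma$ immediately after $\tau$, that is, just before the cells of $\A_j$ of dimension $d+1$. The subcomplex condition~(0) is then clear at every prefix: the facets of $\tau$ in $\A_j \cup \{\tau,\sigma\}$ are cells of $\A_j$ of dimension $d-1$, and the facets of $\sigma$ are cells of $\A_j$ of dimension $d$ together with $\tau$, all of which precede $\sigma$ by construction.

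For condition~(1) on leading terms, the central observation is dimensional: $\inc{\nu}{\tau}^{\A'} = 0$ unless $\dim\nu = d+1$. Hence for a chain $c_i = \sum_\nu \alpha^i_\nu \nu$ homogeneous of dimension other than $d+1$, the scalar $\beta_i := \sum_\nu \alpha^i_\nu \inc{\nu}{\tau}^{\A'}$ vanishes and $c'_i = c_i$, with its (shifted) leading term $\sigma_i$ unchanged. When $\dim c_i = d+1$, the additional $\sigma$-term in $c'_i$ sits at a position strictly smaller than the new position of $\sigma_i$ by our choice of ordering, so $\sigma_i$ remains the leading term. The chain $c_\sigma = \sigma$ obviously has $\sigma$ as leading term, and $c_\tau = \bo'\sigma$ has $\tau$ as leading term because all the other cells of dimension $d$ in $\bo'\sigma$ lie in $\A_j$ and hence precede $\tau$ in the ordering, while $\inc{\sigma}{\tau}^{\A'} \neq 0$ (up to a harmless rescaling if a unit leading coefficient is desired).

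Conditions~(2) and~(3) both reduce to a single computation based on Lemma~\ref{lem:nbo}. Extending the identity of the lemma by linearity yields
\[
	\bo' c_i \;=\; \bo\, c_i + \bigl(\inc{\sigma}{\tau}^{\clx}\bigr)^{-1}\,\beta_i\,\bo'\sigma,
\]
and the correction in the definition of $c'_i$ is tailored precisely to cancel the right-hand summand, giving $\bo' c'_i = \bo\,c_i$. This immediately gives condition~(2) for $f \in F$ from $\bo c_f = 0$, and yields $\bo' c'_h = \bo\, c_h = c_g$ for any old pair $g \leftrightarrow h$. It then remains to identify $c_g$ with $c'_g$, i.e.\ to show that $\beta_g = 0$. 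When $\dim c_g \neq d+1$ the dimensional observation applies again; the delicate subcase, which I expect to be the main obstacle, is $\dim c_h = d+2$ and hence $\dim c_g = d+1$. There I would combine $c_g = \bo' c'_h$ with $(\bo')^2 = 0$ to deduce $\bo' c_g = 0$, and then read off $\beta_g \cdot \bo'\sigma = 0$ from the displayed identity applied to $c_g$; the non-triviality $\bo'\sigma \neq 0$, equivalently $c_\tau \neq 0$, is exactly the condition under which the new Morse pair $c_\tau \leftrightarrow c_\sigma$ is well-defined in the first place. Finally, the new pair is handled directly by $\bo' c_\sigma = \bo'\sigma = c_\tau$, completing the verification.
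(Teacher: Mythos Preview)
Your verification of conditions~(2) and~(3) via the linearized identity from Lemma~\ref{lem:nbo} is correct and matches the paper. Two points deserve attention, one a genuine gap and one a simplification you missed.

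\medskip
\textbf{The ordering does not work as stated.} You assume implicitly that the ordering inherited from $\mtx$ is graded by dimension (your phrase ``just before the cells of $\A_j$ of dimension $d+1$'' presumes that all $(d{+}1)$-cells come after all $d$-cells). Definition~\ref{def:hommat} only requires that each prefix be a subcomplex of $\A_j$; cells of different dimensions not related by the face order of $\A_j$ may appear in either order. More seriously, even once you pick a slot for $\tau,\sigma$, the prefix condition must hold for the incidence of $\A' = \A_j\cup\{\tau,\sigma\}$, not of $\A_j$, and these differ on $(d{+}1)$-cells precisely by Lemma~\ref{lem:nbo}: a $(d{+}1)$-cell $\nu$ with $\inc{\nu}{\tau}^{\A'}\neq 0$ may sit before your insertion point, or may acquire as new $\A'$-facets some $d$-cells of $\bo'\sigma$ that appear \emph{after} $\nu$ in the old order. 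Either situation breaks condition~(0). The paper sidesteps this by using the ordering coming from the standard Morse filtration $\bM_j[j{+}1;\cdot]$ in Diagram~(\ref{eq:bigdiag}): that filtration consists of honest inclusions of complexes with boundary $\bo'$, so its reverse automatically gives a subcomplex ordering of $\A'$, and by Lemma~\ref{lem:simpleprop} the cells $\tau,\sigma$ land consecutively. The leading-term claims are then justified by gradient-path arguments (uniqueness of the path $\sigma\to\tau$ in $\hasse'$ gives $\inc{\sigma}{\tau}^{\A'}\neq 0$, and transitivity along gradient paths shows any $\mu\in\bo'\sigma\setminus\{\tau\}$ is inserted earlier). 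Your dimensional shortcut can be salvaged by first re-sorting $\mtx$ by dimension, which is a legitimate operation on homology matrices, but you would need to say so.

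\medskip
\textbf{The case $g\in G$ is trivial.} In the construction of $\bmtx$, only indices $i\in F\sqcup H$ are modified; for $g\in G$ one has $c'_g=c_g$ by definition. Thus $\bo' c'_h=\bo c_h=c_g=c'_g$ immediately, and there is no need to prove $\beta_g=0$. Your argument for $\beta_g=0$ is nonetheless correct and gives a pleasant consistency check (it shows that applying the same correction formula to $G$-indices would be harmless), but it is not required.
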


\begin{proof}
	We prove that $\bmtx$ satisfies the conditions of Definition~\ref{def:hommat}.

	\begin{enumerate}[itemsep=\medskipamount, start=0]
		\item Because a Morse matching induces an acyclic Hasse Diagram, there exists 
		$r$ such that 
		$\sigma_0,\ldots,\sigma_r,\tau,\sigma,\sigma_{r+1},\ldots,\sigma_{m-1}$ is an 
		ordering of the cells of $\A_j \cup \{\tau,\sigma\}$ such that the first 
		$k$ cells form a subcomplex, for any $k$, as in Definition~\ref{def:hommat}.
		
		\item 
		\textbf{Case $c_\tau, c_\sigma$.} 
		The leading term of $c_\sigma$ is $\sigma$. We prove that the leading 
		term of $c_\tau$ is $\tau$ in the ordering defined above. 
		Let $\hasse$ be the oriented Hasse diagram of $\clx_j$ for the Morse 
		matching where $(\tau,\sigma)$ forms a Morse pair (complex $\A_j$), and 
		$\hasse'$ for the matching where $\tau$ and $\sigma$ are critical 
		(complex $\A_j \cup \{\tau,\sigma\}$); they differ by the orientation of 
		arrow $\sigma \leftrightarrow \tau$. 
		First, $\langle \bo' \sigma, \tau \rangle^{\A_j \cup \{\tau,\sigma\}} \neq 0$ 
		because there exists a unique gradient path from critical cell $\sigma$ 
		to critical cell $\tau$ in $\A_j \cup \{\tau,\sigma\}$, which is the one edge 
		path $\gamma = (\tau,\sigma)$. The path $\gamma$ exists because $\tau$ is a 
		facet of $\sigma$ in $\clx_j$. If there were another distinct gradient 
		path from $\sigma$ to $\tau$ in $\hasse'$, not containing the edge 
		$\sigma \rightarrow \tau$, this path would exist in $\hasse$ and form a 
		cycle with edge $\tau \rightarrow \sigma$ in $\hasse$; a contradiction 
		with the definition of Morse matchings.
		Second, if $\mu \in \A_j \cup \{\tau,\sigma\}$, is critical such that 
		$\inc{\sigma}{\mu}^{\A_j \cup \{\tau,\sigma\}} \neq 0$, then $\mu$ appears 
		before $\sigma$ (and $\tau$) in the ordering. Indeed, there exists a gradient 
		path 
		$\gamma = (\sigma, \mu_1, \w(\mu_1), \ldots, \w(\mu_{r-1}), \mu_r = \mu)$ 
		from $\sigma$ to $\mu$ in $\hasse'$. The cells $(\mu_i,\w(\mu_i))$ of a 
		pair are inserted consecutively by construction, and, for all $i$, 
		$\mu_i$ is inserted before $\w(\mu_{i-1})$ because it is a facet in 
		$\clx_j$. By transitivity, $\mu$ is inserted before $\sigma$.

		\textbf{Case $c'_i$.}
		The leading term of $c'_i$ is $\sigma_i$. If $c'_i = c_i$, it is direct. 
		Otherwise, by construction, 
		$c'_i = c_i + \alpha \cdot \sigma$, $\alpha \neq 0$, and the chain $c_i$ 
		contains cells $\nu$ in its support such that 
		$\inc{\nu}{\tau}^{\A_j \cup \{\tau,\sigma\}} \neq 0$, i.e., cofacets of 
		$\tau$ in $\A_j \cup \{\tau,\sigma\}$. With a similar transitivity argument 
		as above, $\tau$ 
		(and $\sigma$) must consequently appear before such $\nu$ in the ordering 
		of cells defined. The leading term of $c'_i$ is then unchanged.
		
		\item Let $c_i$ be a chain such that $i \in F \sqcup H$. 
		By Lemma~\ref{lem:nbo}, it is a direct calculation from the definition of 
		$c'_i$ that $\bo' c'_i = \bo c_i$. Consequently, Conditions~(2) and~(3) 
		of Definition~\ref{def:hommat} are satisfied for those chains. The 
		pairing $G \leftrightarrow H$ remains valid, because 
		$\bo' c'_h = \bo c_h = c_g = c'_g$ for 
		$g \leftrightarrow h$, $(g,h) \in G \times H$.
		
		\item By definition, $\bo' c_{\sigma} = c_\tau$, 
		their indices are in $H \times G$ and paired together.		
		\qedhere
	\end{enumerate}
\end{proof}

We now prove the compatibility condition:

\begin{lemma}\label{lem:comp}
	The homology matrix $\bmtx$ at $\A_j \cup \{\tau,\sigma\}$ is compatible with 
	$\bM_j$ in Diagram~(\ref{eq:bigdiag}).
\end{lemma}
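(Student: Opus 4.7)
The plan is to extend the compatible isomorphism $\Phi \colon \Hom(\M_j) \to \bigoplus_\ell \Imod[b_\ell;d_\ell]$ provided by $\mtx$ to a compatible isomorphism $\bar\Phi$ for $\Hom(\bM_j)$, by transporting basis vectors across the new arrow $\A_j \xrightarrow{\varphi_{\tau,\sigma}} \A_j \cup \{\tau,\sigma\}$ using the homology isomorphism $(\varphi_{\tau,\sigma})_*$. By Theorem~\ref{thm:equivalence} and Lemma~\ref{lem:rm_half_pair}, the target interval decomposition is the same for both modules, so what remains is to prescribe $\bar\Phi$ on the new spaces of $\bM_j$ (principally $\A_j \cup \{\tau,\sigma\}$) and to verify that every square of the zigzag module commutes.

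The central identity I would prove first is $(\varphi_{\tau,\sigma})_*([c_f]) = [c'_f]$ in $\hf{\A_j \cup \{\tau,\sigma\}}$ for every $f \in F$. Because $(\psi_{\tau,\sigma})_*$ is the inverse of $(\varphi_{\tau,\sigma})_*$ by Properties~\ref{prop:phi}, it suffices to verify the stronger chain-level equality $\psi_{\tau,\sigma}(c'_f) = c_f$. Properties~\ref{prop:phi} state that $\psi_{\tau,\sigma}(\sigma) = 0$ and that $\psi_{\tau,\sigma}$ fixes every cell different from $\tau$ and $\sigma$. Since $c_f$ is a chain supported on critical cells of $\A_j$, which by construction contain neither $\tau$ nor $\sigma$, it is fixed by $\psi_{\tau,\sigma}$; and since $c'_f - c_f$ is, by its definition in the construction of $\bmtx$, a scalar multiple of $\sigma$, the identity follows immediately.

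I would then define $\bar\Phi$ to coincide with $\Phi$ on every space of $\M_j$ and, at $\A_j \cup \{\tau,\sigma\}$, to send $[c'_f] \mapsto e_f$. By the identity above this is also equal to $\Phi_{\A_j} \circ (\psi_{\tau,\sigma})_*$, so the square at the new arrow commutes, while commutativity with the remaining arrows of $\bM_j$ follows from the commuting squares in Diagram~(\ref{eq:bigdiag}) passed through the homology functor. The step I expect to be the most delicate is checking that this extension is consistent on all intermediate spaces $\A'_{j+k} \cup \{\tau,\sigma\}$ of $\bM_j$ that simultaneously carry the pair $(\tau,\sigma)$; but each such space introduces only the same comparison map $\psi_{\tau,\sigma}$ with $\sigma$ maximal throughout $\bM_j[j;j+r]$, so the same chain-level calculation applies uniformly and yields a genuine zigzag module isomorphism, with $\bmtx$ compatible at $\A_j \cup \{\tau,\sigma\}$ by construction.
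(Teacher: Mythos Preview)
Your proposal is correct and follows essentially the same route as the paper's proof: both hinge on the chain-level identity $\psi_{\tau,\sigma}(c'_f) = c_f$ (derived from Properties~\ref{prop:phi} and the fact that $c'_f - c_f$ is a multiple of $\sigma$ with no $\tau$ appearing), and both then use the commuting squares of Diagram~(\ref{eq:bigdiag}) to transport the compatible isomorphism from $\M_j$ to $\bM_j$. The only cosmetic difference is that the paper packages the vertical maps of Diagram~(\ref{eq:bigdiag}) as a single zigzag module isomorphism $\Phi_*\colon \Hom(\bM_j)\to \Hom(\M_j)$ and composes with the given $\Omega$, whereas you describe the same composition space-by-space; this automatically handles the intermediate spaces $\A'_{j+k}\cup\{\tau,\sigma\}$ that you flagged as delicate.
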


\begin{proof}
	By hypothesis, $\mtx = \{ c_0, \ldots, c_{m-1}\}$ is a homology matrix at 
	$A_j$, compatible with $\M_j$; let 
	$\Omega \colon \Hom(\M_j) \to \oplus_\ell \Imod[b_\ell;d_\ell]$ be a zigzag 
	module isomorphism such that $\Omega_j$ sends $\{[c_f] : f \in F\}$ to the 
	canonical basis of $\field \times \ldots \times \field$. 

	Note that, none of the $c'_i$ have an entry $\tau$, except for $c_\tau$, 
	whose index is in $G$ by construction. Consequently, by 
	Properties~\ref{prop:phi}, the chain map 
	$\psi_{\tau,\sigma} \colon \chain(\A_j,\sigma,\tau) \to \chain(\A_j)$ simply 
	cancels the entry $\sigma$ in every $c'_f$, $f \in F$, and 
	$\psi_{\tau,\sigma} c'_f = c_f$. Consequently, consider the chain maps 
	between $\bM_j$ and $\M_j$ in Diagram~(\ref{eq:bigdiag}). Each square 
	commutes by virtue of Theorem~\ref{thm:filteredmorse} (for inclusions) and 
	Lemma~\ref{lem:rm_half_pair} (for $\varphi_{\tau,\sigma}$), and they induce 
	an isomorphism $\Phi_* \colon \Hom(\bM) \to \Hom(\M)$ of zigzag modules. The 
	isomorphism 
	$\Omega \circ \Phi_* \colon \Hom(\bM) \to \oplus_\ell \Imod[b_\ell;d_\ell]$ 
	sends $\{[c'_f] : f \in F\}$ to the canonical basis of 
	$\field \times \ldots \times \field$, and $\bmtx$ is compatible with $\bM$.
\end{proof}

\begin{algorithm}[t!]
	\SetKwFunction{ZPA}{zigzag\_persistence\_algorithm}
  	\SetKwInOut{Output}{output}
  	\SetKwInOut{Input}{input}
  	\Input{atomic zigzag filtration $\F \ := \ \ \ \xymatrix @C+8pt{
  		(\emptyset =) \ \clx_1 \ar@{<->}[r] 
  		& \clx_2 \ar@{<->}[r] 
  		& \cdots \ar@{<->}[r] 
  		& \clx_{n-1} \ar@{<->}[r] 
  		& \clx_n \ (= \emptyset) \\
  	}$}
  	\Output{persistence diagram of $\F$}
  	set $M_{\mtx} \leftarrow \emptyset$\;
  	\For{$j=1 \ldots n-1$}{
    	\If{
    		$\xymatrix@C+5pt{
    			\clx_j \ar@{<->}[r]^-{\sigma} & \clx_{j+1}
    		}$, $\sigma \in \clx_j$ critical
    	}{
      		use \ZPA{$M_{\mtx}$, $\M_j$, $\sigma$} to add or remove $\sigma$\;
    	}
    	\If{
    		$\xymatrix@C+5pt{
    			\clx_j \ar@{<->}[r]^-{\{\tau,\sigma\}} & \clx_{j+1}
    		}$, $(\tau,\sigma)$ Morse pair
    	}{
      		do nothing\;
    	}
    	\If{
    		$\xymatrix@C+5pt{
    			\clx_j \ar@{->}[r]^-{\id} 
    			& \clx_j \ar@{<-}[r]^-{\sigma} 
    			& \clx_{j+1}
    		}$, $\sigma$ paired with $\tau$, $\tau$ not removed
    	}{
    		set $M_{\mtx} \leftarrow M_{\bmtx}$ as described above\;
    		use \ZPA{$M_{\mtx}$,$\bM_j$,$\sigma$} to remove $\sigma$\;
    	}
  	}
  	\caption{Zigzag persistence algorithm for Morse filtrations}
  	\label{alg:morse_partition}
\end{algorithm}

In conclusion, for an input atomic zigzag operation $\F$, with three atomic maps 
pictured in Diagrams~(\ref{eq:type1}),~(\ref{eq:type2}), and~(\ref{eq:type3}), 
the Morse algorithm for computing the zigzag persistence of $\F$ is given in 
Algorithm~\ref{alg:morse_partition}, where the routine 
\ZPA{$M_{\mtx}$, $\M_j$, $\sigma$} is the zigzag persistence algorithm 
of~\cite{MariaO15} to handle forward or backward insertions of a single cell in 
a homology matrix $M_{\mtx}$ at complex $\A_j$, compatible with the filtration 
$\M_j$ (see Diagram~\ref{eq:bigdiag}). Each iteration of the {\tt for} loop 
turns a homology matrix $M_{\mtx}$ at complex $\A_j$, compatible with the 
filtration $\M_j$, into a homology matrix at complex $\A_{j+1}$, compatible with 
the filtration $\M_{j+1}$, where $\M_{j+1}$ is a zigzag Morse filtration for 
$\F_{j+1}$, and $\A_{j}$ and $\A_{j+1}$ are respectively Morse complexes for 
$\clx_j$ and $\clx_{j+1}$.

\paragraph{Implementation and complexity.} 
We represent $\mtx = \{c_0, \ldots , c_{m-1}\}$ by an $(m\times m)$-sparse 
matrix data structure $M_{\mtx}$. Assume computing boundaries and coboundaries 
in a Morse complex of size $m$ is given by an oracle of complexity $\orac(m)$. 
We implement the transformation 
$\mtx = \{c_0, \ldots, c_{m-1}\} \to \bmtx 
= \{c'_0, \ldots,\allowbreak c'_{m-1}, c_\tau, c_\sigma\}$ 
presented above by:
\begin{itemize}
	\item computing the boundary $\bo' \sigma$ of $\sigma$ in 
	$\A_j \cup \{\tau,\sigma\}$, and the coboundary 
	$\{\nu : \inc{\nu}{\tau}^{\A_j \cup \{\tau,\sigma\}} \neq 0 \}$ of $\tau$, 
	in $O(\orac(m))$ operations,
	\item adding columns $c_\tau$ and $c_\sigma$ to the matrix in $O(m)$ 
	operations,
	\item computing $c'_i$ for all $i$, in $O(m^2)$. We can restrict the 
	transformation to those $c_i$ containing a cell of the coboundary of $\tau$.
\end{itemize}

Consequently, we can perform the transformation above in $O(m^2 + \orac(m))$ 
operations on a $(m \times m)$-matrix. The zigzag persistence algorithm 
of~\cite{CarlssonSM09,MariaO15} deals with forward and backward insertions of a 
single cell in $O(m^2)$ operations. 

In conclusion, let 
$\oF = (
	\xymatrix@C+5pt{
		\oclx_{i} \ar@{<->}[r]|-{\, \Sigma_i \, } 
		& \oclx_{i+1}
})_{i=1 \ldots 2k}$ 
be a general zigzag filtration (Diagram~(\ref{eq:genzzfil})), and let $\M$ be a 
zigzag Morse filtration as defined in Section~\ref{sec:hl_algo}, for a 
collection of Morse matchings $\morse{i}$ on $\Sigma_i$, $i$ odd. And:
\begin{itemize}
	\item denote by $n$ the total number of insertions and deletions critical 
	cells in $\M$, and by $|\A_m|$ the maximal number of critical cells of a 
	complex in $\M$,
	\item denote by $N$ the total number of insertion and deletion of cells in 
	$\oF$, and by $|\clx_m|$ the maximal number of cells of a complex in $\oF$.
\end{itemize}

Additionally, we compute Morse matchings using the fast coreduction algorithm of 
Mrozek and Batko~\cite{DBLP:journals/dcg/MrozekB09}. Even if computing optimal 
Morse matchings is hard in general~\cite{DBLP:journals/siamdm/JoswigP06}, this 
heuristic gives experimentally very small Morse complexes, with constant 
amortized cost per cell considered. We compute boundaries and coboundaries in a 
Morse complex $\A$ of a complex $\clx$ by a linear traversal of the Hasse 
diagram of $\clx$. We store in memory the homology matrix of the Morse complex 
and the complex $\clx$. Consequently, the total cost of the algorithm is:

\begin{theorem}
 	The persistent homology of $\oF$ can be computed in 
 	
 	\begin{itemize*}[itemjoin=\hspace{20px}]
		\item time: $O(n \cdot |\A_m|^2 + n \cdot |\clx_m| + N)$,
		\item memory: $O(|\A_m|^2 + |\clx_m|)$.
	\end{itemize*}
\end{theorem}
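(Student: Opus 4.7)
The plan is to sum the costs of the three distinct components of the algorithm: (a) the Morse matching precomputation on each batch of inserted faces, (b) the maintenance of the current complex and its Hasse diagram, and (c) the homology-matrix updates performed by Algorithm~\ref{alg:morse_partition}. I will first show that the atomic filtration $\F$ can be produced on the fly while reading $\oF$ in time $O(N)$, then charge each of the three types of atomic operations to the right terms in the claimed bound.

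First I would argue the preprocessing cost. For every forward batch $\xymatrix@C+5pt{\oclx_i \ar@{^{(}->}[r]^-{\Sigma_i} & \oclx_{i+1}}$, the Morse matching $\morseh{i}$ on $\Sigma_i$ is obtained by the Mrozek--Batko coreduction heuristic, which spends amortized $O(1)$ work per face~\cite{DBLP:journals/dcg/MrozekB09}. Producing a total order of $\Sigma_i$ compatible with $\wh_i$, so as to decompose the batch into the atomic operations of Diagrams~(\ref{eq:type1})--(\ref{eq:type3}), is again linear in $|\Sigma_i|$. Summing over all batches (forward and backward) yields a preprocessing cost of $O(N)$, which accounts for the additive term in the claimed time bound and uses only $O(|\clx_m|)$ extra memory (we maintain $\clx_j$ together with its Hasse diagram and the current Morse pairing, which is enough to compute boundaries and coboundaries in $\A_j$ by a linear traversal, giving an oracle of complexity $\orac(|\A_j|) = O(|\clx_m|)$).

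Next I would bound the work charged to each atomic operation of $\F$. Operations of type~(\ref{eq:type2}), i.e., insertions or deletions of a Morse pair, cost $O(1)$ inside the main loop since the homology matrix is untouched. Each insertion or deletion of a critical cell as in~(\ref{eq:type1}) triggers one call to \texttt{zigzag\_persistence\_algorithm} on an $(|\A_j| \times |\A_j|)$-matrix, which by~\cite{CarlssonSM09,MariaO15} costs $O(|\A_m|^2)$. An operation of type~(\ref{eq:type3}) on a half pair $(\tau,\sigma)$ first performs the update $\mtx \to \bmtx$ exhibited just before Lemma~\ref{lem:updategiveshommat}; by the implementation analysis preceding the theorem, this costs one oracle call for $\bo'\sigma$ and the coboundary of $\tau$, i.e., $O(|\clx_m|)$ via the Hasse diagram, plus $O(|\A_m|^2)$ to rewrite the affected columns $c_i \mapsto c'_i$ and to add $c_\tau, c_\sigma$. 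It is then followed by one ZPA call of cost $O(|\A_m|^2)$ to remove the now-critical $\sigma$. Since operations of types~(\ref{eq:type1}) and~(\ref{eq:type3}) are precisely the ones that change the number of critical cells and together account for the $n$ critical insertions/deletions in $\M$, summing gives $O(n \cdot |\A_m|^2 + n \cdot |\clx_m|)$ for the main loop, while the type~(\ref{eq:type2}) operations contribute only $O(N)$ overall; added to the preprocessing term this matches the claimed running time.

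For the memory bound I would simply observe that the algorithm stores, at each step, only the current complex $\clx_j$ together with its Hasse diagram and current Morse matching, requiring $O(|\clx_m|)$ space, and the sparse $(|\A_j|\times|\A_j|)$ homology matrix $M_{\mtx}$, requiring $O(|\A_m|^2)$ space; no information about past or future complexes is retained, consistently with the streaming model announced in the introduction. The main obstacle in the argument is verifying that the oracle $\orac(m)$ can indeed be realized in $O(|\clx_m|)$ and that the ``restricted'' rewrite of the columns $c_i$ touching the coboundary of $\tau$ fits within $O(|\A_m|^2)$ amortized across the whole algorithm; both follow from the fact that each column affected has length at most $|\A_m|$ and that at most $|\A_m|$ of them can be touched per operation, so no finer amortization is needed beyond the per-operation bound above.
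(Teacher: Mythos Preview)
Your proposal is correct and follows the same approach as the paper, which does not give a separate formal proof but rather states the theorem as a summary of the preceding complexity analysis (the paragraphs on ``Implementation and complexity'' together with the oracle cost $\orac(m)=O(|\clx_m|)$ via Hasse-diagram traversal and the Mrozek--Batko amortized cost for matchings). One small point: when a critical cell is inserted in a type~(\ref{eq:type1}) operation, the call to \texttt{zigzag\_persistence\_algorithm} also needs the Morse boundary $\bo^{\A}\sigma'$, which again costs one oracle call $O(|\clx_m|)$; you only mention the oracle for type~(\ref{eq:type3}), but since both kinds of operations are counted in $n$ the final bound $O(n\cdot|\A_m|^2 + n\cdot|\clx_m| + N)$ is unaffected.
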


In comparison, running the (practical) zigzag persistence 
algorithms~\cite{CarlssonS10,CarlssonSM09,MariaO15} require 
$O(N \cdot |\clx_m|^2)$ operation and memory $O(|\clx_m|^2)$.

\section{Experiments}
\label{sec:expe}
\begin{table}[t]
	\centering
	\small
	\textsf{
	\makebox[\textwidth][c]{
	\begin{tabular}{|l|*{8}{r|}}
\cline{2-9}
\multicolumn{1}{l|}{}
	& \multicolumn{4}{c|}{\texttt{Without Morse reduction}}
	& \multicolumn{4}{c|}{\texttt{With Morse reduction}}\\
\cline{2-9}
\multicolumn{1}{l|}{}
	& \makecell*[c]{$N$ \\ $\times 10^6$}	& \makecell*[c]{$|\clx_m|$}
	& \makecell*[c]{time (s) \\ cpx + pers}	
	& \makecell*[c]{mem. \\ peak \\ (GB)}
	& \makecell*[c]{$n$ \\ $\times 10^6$}	& \makecell*[c]{$|\A_m|$}
	& \makecell*[c]{time (s) \\ cpx + pers}	
	& \makecell*[c]{mem. \\ peak \\ (GB)} \\
\hline
\texttt{KlBt5}
	& 63.3							& 187096
	& 403 + 2912
	& 4.7
	& 4.9							& 11272
	& 394 + 448
	& 1.1	\\
\texttt{Spi3}
	& 66.1							& 47296
	& 435 + 4438
	& 5.2
	& 3.8							& 12810
	& 382 + 343
	& 1.1	\\
\texttt{MoCh}
	& 75.7 							& 37709
	& 460 + 4680
	& 5.8
	& 4.1 							& 11975
	& 450 + 318
	& 1.1	\\
\texttt{Sph3}
	& 99.4 							& 66848
	& 430 + 3498
	& 7.5
	& 4.2 							& 13432
	& 665 + 853
	& 1.3	\\
\texttt{To3}
	& 32.8							& 32903
	& 117 + 847
	& 2.4
	& 1.6 							& 7570
	& 173 + 79
	& 0.47	\\
\texttt{By}
	& 30.5							& 18764
	& 153 + 951
	& 2.3
	& 5.2							& 8677
	& 165 + 287
	& 0.96	\\
\hline
	\end{tabular}
	}
	}
	\caption{Experimental results for the oscillating Rips zigzag filtrations. 
	For each experiment, the maximal dimension is $10$, $\mu = 4$, $\nu = 6$, 
	except for \texttt{Sph3}, where $\nu = 7$. The number of vertices is $2000$.
	}\label{tab:results_osc}
\end{table}

In this section, we report on the performance of the zigzag persistence 
algorithm~\cite{MariaO15} with and without Morse reduction. The corresponding code 
will be avaible in a future release of the open source library 
\texttt{GUDHI}~\cite{gudhi:urm}.

The following tests are made on a 64-bit Linux (Ubuntu) HP machine
with a 3.50 GHz Intel processor and 63 GB RAM. The programs are all 
implemented in C++ and compiled with optimization level \texttt{-O2} and 
\texttt{gcc-8}. Memory peaks are obtained via the \texttt{/usr/bin/time -f} 
Linux command, and timings are measured via the C++ 
\texttt{std::chrono::sys\-tem\_clock::now()} method.
The timings for File IO are not included in any process time.

We run two types of experiments: homology inference from point clouds, using 
oscillating Rips zigzag filtrations, and levelset persistence of 3D-images. Both 
applications are described in the introduction.

For homology inference, we use both synthetic and real data points. 
The point clouds \texttt{KlBt5}, \texttt{Spi3}, \texttt{Sph3}, and \texttt{To3} 
are synthetic samples of respectively the 5-dimensional Klein bottle, a 
3-dimensional spiral wrapped around a torus, the 3-dimensional sphere, and the 
3-dimensional torus. The point cloud \texttt{MoCh} and \texttt{By} are 
3-dimensional measured samples of surface models: the MotherChild model, and the 
Stanford bunny model from the Stanford Computer Graphics Laboratory. The results 
with corresponding parameters are presented in Table~\ref{tab:results_osc}.

Levelset persistence is computed for a function $f: [0;1]^3 \to \R$, were $f$ is 
a Fourier sum with random coefficients, as proposed in the \texttt{DIPHA} 
library\footnote{
\url{github.com/DIPHA/dipha/blob/master/matlab/create_smooth_image_data.m}} 
as representative of smooth data. The cube $[0;1]^3$ and function $f$ are 
discretized into equal size voxels. For some tests, we also added 
random noise to the values of $f$. The values of $s_1 \leq s_2 \leq \ldots$ are 
spaced out equally such that $s_{i + 1} - s_i = \epsilon$ for all $i$.
The results with corresponding parameters are presented in 
Table~\ref{tab:results_ls}.

In all experiments, timings are decomposed into `cpx' for computation de-dicated 
to the complex (construction, computation of (co)boundaries and of Morse 
matchings) and `pers' for the computation of zigzag persistence.

\begin{table}[t]
	\centering
	\small
	\textsf{
	\makebox[\textwidth][c]{
	\begin{tabular}{|r|*{9}{r|}}
\cline{3-10}
\multicolumn{2}{l|}{}
	& \multicolumn{4}{c|}{\texttt{Without Morse reduction}}
	& \multicolumn{4}{c|}{\texttt{With Morse reduction}}\\
\cline{1-10}
\makecell*[c]{$\epsilon$}		& \makecell*[c]{max. \\ noise}	
	& \makecell*[c]{$N$ \\ $\times 10^6$}	& \makecell*[c]{$|\clx_m|$}
	& \makecell*[c]{time (s) \\ cpx + pers}	
	& \makecell*[c]{mem. \\ peak \\ (GB)}
	& \makecell*[c]{$n$ \\ $\times 10^6$}	& \makecell*[c]{$|\A_m|$}
	& \makecell*[c]{time (s) \\ cpx + pers}	
	& \makecell*[c]{mem. \\ peak \\ (GB)} \\
\hline
0.1								& 0
	& 34							& 286780
	& 563 + 1725
	& 3.9
	& 6.3							& 48578
	& 224 + 29
	& 2.7	\\
0.15							& 0
	& -									& -
	& $\infty$
	& -
	& 9.3							& 115558
	& 756 + 44
	& 3.6	\\
0.15							& 0.5
	& 36.5							& 315305
	& 417 + 3248
	& 4.2
	& 4.7							& 36144
	& 221 + 59
	& 2.8	\\
0.2								& 0
	& -									& -
	& $\infty$
	& -
	& 15.5							& 245360
	& 2097 + 68
	& 4.7	\\
0.2								& 0.5
	& -									& -
	& $\infty$
	& -
	& 5.6							& 56500
	& 392 + 47
	& 3.4	\\
\hline
	\end{tabular}
	}
	}
	\caption{Experimental results for the level set zigzag filtrations. 
	For each experiment, the function $f: [0;1]^3 \to [-14,21]$ is applied to 
	$129^3 = 2\,146\,689$ cells and the persistence is computed for maximal 
	dimension~$3$. The interval size is denoted by $\epsilon$. 
	The infinity symbol $\infty$ corresponds to more than 12 hours computing 
	time.}
	\label{tab:results_ls}
\end{table}

\paragraph{Analysis of the results.} 
The results show a significant improvement when using Morse reduction. 
For homology inference (Table~\ref{tab:results_osc}), the total running time is 
between 2.5 and 6.7 times faster when using Morse reduction. Moreover, most of 
the computation is transferred onto the computation of the Morse complex, which 
opens new roads to improvement in future implementation, such as parallelization 
of the Morse reduction~\cite{8440824} (note that parallelization of the 
computation of zigzag persistence is not possible in the streaming model). 
In particular, the computation of zigzag persistence is from 3.3 to 14.7 times 
faster. The better performance is due to filtrations being from 5.8 to 23.5 
times shorter than the original ones (quantities $n$ vs $N$ in the complexity 
analysis) and smaller complexes, from 2.2 to 16.6 times smaller with the Morse 
reduction (quantities $|\A_m|$ and $|\clx_m|$ in the complexity analysis). Note 
that the memory consumption with Morse reduction is from 2.4 and up to 5.6 times 
smaller, which is critical on complex examples in practice.

For levelset persistence (Table~\ref{tab:results_ls}), the total running time is 
at least 9 times faster, and the computation of zigzag persistence alone is 
itself approximatively 55 times faster, when the computation without Morse 
reduction finished. On those cases that finish, the filtration size is from 5.5 
to 7.7 times shorter with Morse reduction, the maximal size of the complexes 
between 5.9 and 8.7 times smaller, and the memory consumption around $50\%$ more 
efficient.

Additionally, using Morse reduction allows to handle cases where the standard 
zigzag algorithm never finishes (more than 12 hrs). On these examples, the Morse 
algorithm does not take more than 36 min. for the entire computation.

These results agree with the complexity analysis (Section~\ref{sec:ll_algo}) 
where terms $O(|\A_m|^2)$ and $O(|\clx_m|^2)$ dominate both time and memory 
complexities.


\bibliographystyle{plain}
\bibliography{bibliography}

\end{document}